\newtheorem*{rep@theorem}{\rep@title}
\newcommand{\newreptheorem}[2]{%
\newenvironment{rep#1}[1]{%
 \def\rep@title{#2 \ref{##1}}%
 \begin{rep@theorem}}%
 {\end{rep@theorem}}}
\pgfplotsset{compat=1.6}
\newtheorem{theorem}{Theorem}
\newtheorem{claim}[theorem]{Claim}
\newcommand{\Q}{\mathrm{Q}}
\newcommand{\E}{\mathrm{e}}
\newcommand{\I}{\mathrm{i}}
\DeclareMathOperator{\Oh}{O}
\DeclareMathOperator{\oh}{o}
\DeclareMathOperator{\poly}{poly}
\DeclareMathOperator{\aux}{aux}
\newcommand{\Mod}[1]{\ (\mathrm{mod}\ #1)}
\newcommand{\norm}[1]{\left\lVert#1\right\rVert}
\title{Quantum Search on Computation Trees}
\author{Jevgēnijs Vihrovs}
\affil{Centre for Quantum Computer Science, Faculty of Science and Technology,\authorcr University of Latvia, Raiņa 19, Riga, Latvia, LV-1050}
\date{}
\begin{document}

\maketitle

\begin{abstract}
    We show a simple generalization of the quantum walk algorithm for search in backtracking trees by Montanaro (ToC 2018) to the case where vertices can have different times of computation.
    If a vertex $v$ in a tree of depth $D$ is computed in $t_v$ steps from its parent, then we show that detection of a marked vertex requires $\Oh(\sqrt{TD})$ queries to the steps of the computing procedures, where $T = \sum_v t_v^2$.
    This framework provides an easy and convenient way to re-obtain a number of other quantum frameworks like variable time search, quantum divide \& conquer and bomb query algorithms.
    The underlying algorithm is simple, explicitly constructed, and has low poly-logarithmic factors in the complexity.
    
    As a corollary, this gives a quantum algorithm for variable time search with unknown times with optimal query complexity $\Oh(\sqrt{T \log \min(n,t_{\max})})$, where $T = \sum_i t_i^2$ and $t_{\max} = \max_i t_i$ if $t_i$ is the number of steps required to compute the $i$-th variable.
    This resolves the open question of the query complexity of variable time search, as a matching lower bound was recently shown by Ambainis, Kokainis and Vihrovs (TQC'23).
    As another result, we obtain an $\widetilde \Oh(n)$ time algorithm for the geometric task of determining if any three lines among $n$ given intersect at the same point, improving the $\Oh(n^{1+\oh(1)})$ algorithm of Ambainis and Larka (TQC'20).
\end{abstract}

\section{Introduction}

For many algorithms, the underlying computation process can be naturally described as exploring a tree until finding a solution at one of the leaves.
Just some of the examples include backtracking procedures, divide \& conquer algorithms and search data structures.
In \cite{Montanaro15}, Montanaro gave a quantum walk algorithm for searching for a solution in classical backtracking trees, with application to constraint satisfaction problems.
In general, their algorithm can be applied for detection of a solution in a tree of size $T$ and depth $D$, using $\Oh(\sqrt{TD})$ steps of a quantum walk.

In this work we examine a generalization of this problem of detecting a solution (a marked vertex) in a rooted tree, which might not be known beforehand.
Each child can be computed locally from its parent by a transition procedure; we consider that computation of each vertex $v$ from its parent requires $t_v$ transition steps.
By adjusting the quantum walk in Montanaro's algorithm, we show the following result:
\begin{theorem}[informal] \label{thm:main}
    Suppose that the costs $t_v$ are known beforehand and the depth of the tree is at most $D$.
    Then there exists a bounded-error quantum algorithm that detects a marked vertex using $\Oh(\sqrt{TD})$ queries to the individual steps of the transition procedures, where $T = \sum_v t_v^2$.
    If $t_v$ are unknown, the algorithm has query complexity $\Oh(\sqrt{TD\log t_{\max}})$, where $t_{\max} = \max_v t_v$.
\end{theorem}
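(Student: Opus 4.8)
The plan is to reduce the weighted-tree problem to the quantum walk for backtracking trees by replacing each costly transition with a subdivided, appropriately weighted path, so that the total conductance and the effective resistance of the resulting electric network reproduce exactly the parameters $T$ and $D$. Recall that Montanaro's walk, viewed through the electric-network lens of Belovs, detects a marked vertex of a weighted rooted tree using $\Oh(\sqrt{WR})$ walk steps, where $W = \sum_e c_e$ is the total conductance and $R$ is the effective resistance between the root and the marked set. I would take this weighted detection bound as a black box and design the weights so that $W \approx T$ and $R \approx D$.

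First I would construct the subdivided tree: for each vertex $v$ with transition cost $t_v$, replace the edge from its parent by a path of $t_v$ unit segments through $t_v - 1$ fresh auxiliary vertices, so that traversing this path corresponds to executing the $t_v$ individual steps of the procedure that computes $v$. Only original vertices are eligible to be marked; the auxiliary vertices never are. In the known-cost case I would assign conductance $t_v$ (equivalently resistance $1/t_v$) to every one of the $t_v$ segments belonging to $v$. Then the resistance contributed by $v$'s segment is $t_v \cdot (1/t_v) = 1$, so the effective resistance from the root to any original vertex at depth $d$ is exactly $d \le D$, giving $R \le D$; and the conductance contributed by $v$ is $t_v \cdot t_v = t_v^2$, so $W = \sum_v t_v^2 = T$. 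The detection bound then yields $\Oh(\sqrt{WR}) = \Oh(\sqrt{TD})$ walk steps, and since each step moves across one subdivided segment it costs a single query to one step of the transition procedures, establishing the first claim.

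For the unknown-cost case the difficulty is that the weight $t_v$ cannot be set in advance. Here I would instead use a \emph{position-dependent} weighting that does not reference $t_v$ at all: assign conductance $k$ (resistance $1/k$) to the $k$-th segment of every path, for $k = 1, 2, \dots$. The resistance along $v$'s path is then the partial harmonic sum $\sum_{k=1}^{t_v} 1/k = \Oh(\log t_v) = \Oh(\log t_{\max})$, so $R = \Oh(D \log t_{\max})$, while the conductance along the path is $\sum_{k=1}^{t_v} k = \Theta(t_v^2)$, so $W = \Theta(T)$. This reproduces $\Oh(\sqrt{WR}) = \Oh(\sqrt{TD \log t_{\max}})$, as claimed; the harmonic profile is exactly what converts ``not knowing $t_v$'' into a single logarithmic overhead.

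The step I expect to be the main obstacle is verifying that the walk on this subdivided, weighted tree is implementable at one transition-step query per walk step, coherently and reversibly. Each walk step is a product of local diffusions over all same-parity vertices applied in superposition; at an auxiliary vertex this is a fixed two-dimensional rotation determined by the known incident segment weights, but at an original vertex it must enumerate the children and their incident weights, which requires invoking the transition procedure. The care lies in implementing this as ``apply one transition step, perform the local diffusion, uncompute'', so that the walk cost is faithfully charged as one query; and in the unknown-cost case, in having the procedure expose the segment index $k$ and signal the moment a genuine child is reached, so the correct position-dependent weight is applied and the path terminates at the right place. Crucially, the effective resistance is governed by the weighting and stays $\Oh(D)$ (resp.\ $\Oh(D \log t_{\max})$) even though the subdivided depth $\sum_v t_v$ along a branch may be far larger; confirming that the detection bound depends only on $WR$ and not on this inflated diameter is the heart of the argument.
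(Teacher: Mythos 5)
Your proposal is correct, and for the known-times case it is essentially the paper's own construction: the paper likewise subdivides the edge into $v$ into $t_v$ segments (the vertices $s(v,t)$), gives each of them weight $t_v$, and obtains $\mathcal W = T$ and path ``resistance'' $\Oh(D)$. Two points of divergence are worth noting. First, you invoke the weighted detection bound $\Oh(\sqrt{\mathcal W \mathcal R})$ as a black box; Montanaro's original theorem covers only unweighted trees, so the weighted version is really the technical content here, and the paper supplies it directly via the Effective Spectral Gap Lemma and phase estimation --- exhibiting the $1$-eigenvector $\ket{\phi_m} = \sum_{x} (-1)^{\ell(x)} w_x^{-1/2}\ket{x}$ along the marked path (norm squared $\leq 2D$, overlap $\geq 1/\sqrt 2$ with the root) in the positive case, and the state $\ket{\eta} = \sum_x \sqrt{w_x/w_r}\,\ket{x}$ with $w_r = 1/D$ in the negative case, giving $\norm{P_\epsilon\ket{r}} \leq \epsilon\sqrt{1+DT}$. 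A self-contained writeup would need this verification, but the bound you cite is available in the electrical-network literature, so this is a presentational rather than a logical gap. Second, for unknown times you genuinely take a different route: you weight the $k$-th segment by $k$, which is exactly the harmonic profile of Jeffery and Cornelissen that the paper discusses as ``Option 3'' in its weighing-choices section, whereas the paper instead reduces unknown to known times by exponential search --- doubling blocks of lengths $1,2,4,\dots,2^a$, each treated as a single known-time vertex --- which multiplies the depth by $\Oh(\log t_{\max})$ while keeping $\sum_v t_v^2$ within a constant factor, and then applies Theorem~\ref{thm:known}. Both yield $\Oh(\sqrt{TD\log t_{\max}})$; your weighting is cleaner analytically and better by constant factors, while the paper's choice buys a gate-friendly implementation (only $\Oh(\log t_{\max})$ distinct rotation angles, triggered by comparing the step counter to a power of two, with no position-dependent lookup). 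The implementation concern you flag at the end --- charging each walk step a constant number of queries via uncompute/diffuse/recompute --- is exactly what the paper's Claim~\ref{thm:step} handles, with each application of $U$ costing four queries to $A$.
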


The condition that the times are known beforehand can appear, for example, in divide an conquer algorithms, where the computation tree is a full $k$-ary tree and each level of recursion has an upper bound on its running time.
The unknown times case is more natural for backtracking trees, in which case the complexity gets worse just by a square root of the logarithm.
The dependence on $T$ in both cases is optimal, since this problem is a generalization of quantum search with variable times, whose query complexity is essentially characterized by $\sqrt{T}$ \cite{Amb10}.

While the results we obtain using the algorithm are generally well-known, our main focus is on the simplicity and convenience of application.
The algorithm has many useful qualities -- first of all, it is a simple and natural quantum walk algorithm.
It has very low and in some cases optimal poly-logarithmic factors for query complexity.
This approach also gives a useful general framework in which we can plug in existing algorithms to produce a quantum speedup with a provable complexity guarantee.

\subsection{Our results}

We apply Theorem \ref{thm:main} to reproduce three quantum frameworks where the respective computation process can be seen as a search for marked vertices in a tree with different computation times for each vertex (a \emph{computation} tree).
Each of them is identified by a class of trees as shown in Figure \ref{fig:trees}.

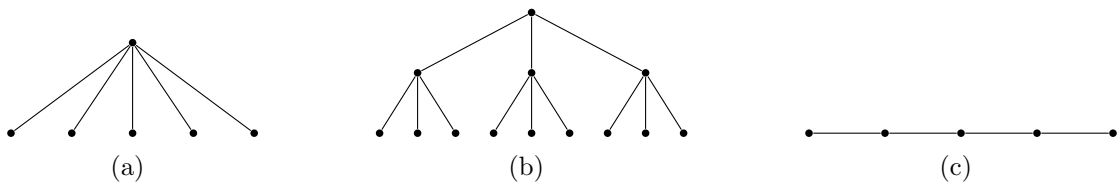
\begin{figure}[H]
    \centering
    \subfigure[]{
        \begin{tikzpicture}[
          level 1/.style={sibling distance=.8cm, level distance=1.2cm},
          vertex/.style={circle, fill=black, inner sep=1pt}
        ]
          \node[vertex] {}
            child {node[vertex] {} }
            child {node[vertex] {} }
            child {node[vertex] {} }
            child {node[vertex] {} }
            child {node[vertex] {} };
        \end{tikzpicture}
    }
    \hspace{1cm}
    \subfigure[]{
        \begin{tikzpicture}[
          level 1/.style={sibling distance=1.5cm, level distance=.8cm},
          level 2/.style={sibling distance=.5cm, level distance=.8cm},
          vertex/.style={circle, fill=black, inner sep=1pt}
        ]
          \node[vertex] {}
            child {node[vertex] {} 
                child {node[vertex] {} }
                child {node[vertex] {} }
                child {node[vertex] {} }
            }
            child {node[vertex] {} 
                child {node[vertex] {} }
                child {node[vertex] {} }
                child {node[vertex] {} }
            }
            child {node[vertex] {} 
                child {node[vertex] {} }
                child {node[vertex] {} }
                child {node[vertex] {} }
            };
        \end{tikzpicture}
    }
    \hspace{1cm}
    \subfigure[]{
    \begin{tikzpicture}[
        vertex/.style={circle, fill=black, inner sep=1pt},
        dummy/.style={circle, fill=white}]
      \node[vertex] (v1) at (0,0) {};
      \node[vertex] (v2) at (1,0) {};
      \node[vertex] (v3) at (2,0) {};
      \node[vertex] (v4) at (3,0) {};
      \node[vertex] (v5) at (4,0) {};
      
    
      \draw (v1) -- node[above] {} (v2);
      \draw (v2) -- node[above] {} (v3);
      \draw (v3) -- node[above] {} (v4);
      \draw (v4) -- node[above] {} (v5);
    \end{tikzpicture}
    }
    \caption{A way of seeing different quantum frameworks as computation trees: (a) a star in the case of variable time search; (b) the divide \& conquer process is represented by a full $k$-ary tree; (c) a line in the case of quantum bomb query algorithms.} \label{fig:trees}
\end{figure}

\paragraph{Search with variable times.}

First we apply our algorithm to the problem of quantum variable time search (VTS), which is a generalization of unstructured search solved by Grover's algorithm.
In this problem, the $i$-th item is computed by a procedure $A_i$ that takes $t_i$ steps, with different values for separate items.
Such situations arise naturally when a search problem is split into multiple subproblems of different size, or when we compose independent procedures with \textsc{Or}, see for example \cite{Amb12, LeG14, CKS17}.

In this work we primarily focus on the optimal quantum \textit{query complexity} of VTS.
Here, by a single query we mean an application of a single step of $A_i$ for a superposition of indices $i$.
In \cite{Amb10}, Ambainis introduced and analyzed this problem, determining that the complexity is essentially characterized by $\sqrt T$, where $T := t_1^2+\ldots+t_n^2$.
They in fact proved that if all $t_i$ are known beforehand, then the quantum query complexity is exactly characterized by $\Theta(\sqrt T)$.

For the other case, when $t_i$ are not known beforehand, they gave a quantum algorithm with query complexity $\Oh(\sqrt{T} \log n \log \log n \log t_{\max} \log \log t_{\max})$, where $t_{\max} := \max_i t_i$.
Even though intuitively this is a harder setting, a lower bound stronger than $\Omega(\sqrt T)$ was lacking.
In \cite{Amb12}, quantum amplitude amplification was generalized to boost the success probability of quantum algorithms with variable stopping time.
For the original VTS problem, it implied an $\Oh(\sqrt T \log^{1.5} t_{\max})$ quantum query algorithm, which shaved off some of the logarithmic factors.

More recently some even more refined algorithms were given.
First, using multi-dimensional quantum walks, Jeffery generalized the quantum walk framework to the setting where transitions on edges take variable times \cite{Jef25}.
As a corollary, they obtained a quantum algorithm with \emph{time complexity} $\Oh(\sqrt{T} \log^{1.5} t_{\max})$.\footnote{We note, however, that different works use slightly different complexity models. For example, if $t_{\max}=1$, then we would get an $\Oh(\sqrt n)$-time algorithm for Grover's search, without any logarithmic factors, which is not known to be possible. This happens here because of the assumption that preparing a uniform superposition over $n$ items takes unit cost.}
Ambainis, Kokainis and Vihrovs gave a simple $\Oh(\sqrt{T} \log n)$ quantum query algorithm, and also finally proved a query lower bound of $\Omega(\sqrt{T \log T})$ \cite{AKV23}.
Even more recently, Cornelissen gave a quantum query algorithm that solves hard instances of this lower bound in the same complexity $\Oh(\sqrt{T \log T})$ \cite{Cor25}.

Regarding Jeffery's algorithm, we observe that their VTS algorithm already has query complexity $\Oh(\sqrt{T \log t_{\max}})$! -- because a factor of $\Oh(\log t_{\max})$ comes from the circuit implementation of a step of their quantum walk, but a single step makes only a constant number of queries to the subprocedures.
This closes the gap of a square root of a logarithm between upper and lower bounds of VTS with unknown times.

In this work, by applying Theorem \ref{thm:main} to VTS with unknown times (Theorem \ref{thm:vts}), we obtain an even simpler quantum walk algorithm also with query complexity $\Oh(\sqrt{T \log t_{\max}})$.
If $n < t_{\max}$, then by grouping every $t_{\max}/n$ steps of procedures into a single step, we get complexity $\Oh(\sqrt{T \log n})$.
The lower bound of \cite{AKV23} can be adjusted to be tight in both cases depending on which of $n$ or $t_{\max}$ is larger.
Therefore, this resolves the open question of the precise asymptotic query complexity of VTS with unknown times.
Our algorithm and the ones of \cite{Jef25, Cor25} share close similarities, as all can be seen as applications of the electrical network framework \cite{Bel13,BCJKM13} with similar weighting schemes. 
\begin{reptheorem}{thm:vts}
    Suppose that we are given upper bound estimates $T'$ and $t'$ such that $T \leq T' = \Oh(T)$ and $t_{\max} \leq t' = \poly(t_{\max})$.
    Then the quantum query complexity of VTS with unknown times is 
    \[\Theta(\sqrt{T \log \min(n, t_{\max})}).\]
\end{reptheorem}

There are a couple of details to note in comparison with previous algorithms.
Firstly, the algorithm requires upper bound estimates on both $T$ and $t_{\max}$, which some of the previous algorithms do not \cite{Amb10}, but others also do \cite{AKV23, Jef25}.
However, this assumption is quite mild, as most commonly VTS is applied in cases where we obtain explicit complexity upper bounds and both $T$ and $t_{\max}$ are known beforehand from the complexity analysis.

Secondly, we assume that the algorithm has query access to the individual steps of the subprocedures $A_i$.
This is in contrast to algorithms \cite{Amb10, AKV23}, where the subprocedures act as black-box oracles: more specifically, any procedure $A_i$ can be run for a chosen amount of $t$ steps and the output is $\star$ if $t < t_i$ (meaning that the computation has not finished) and $x_i$ if $t \geq x_i$.
This assumption is also not very restrictive for the design of quantum query algorithms, since usually we know the subprocedures at hand and their descriptions (e.g.~a list of gates).
Jeffery's algorithm \cite{Jef25} also assumes this condition, in fact it assumes QRAM access to the gates of the subprocedures.
In general, some type of QRAM is necessary for general case VTS algorithms to call any type of $A_i$ oracles in superposition, if we are concerned with the quantum time complexity.

\paragraph{Divide \& conquer algorithms.}

Our second application is to the quantum version of the divide \& conquer paradigm.
In the typical classical divide \& conquer formulation, a problem of size $n$ is partitioned into $a$ subproblems of size $n/b$, which are solved recursively (we assume that $a$ and $b$ are constants).
The results are then combined by some function and passed back to the upper recursion call.
The additional computation cost incurred by the partitioning and combining is denoted by $T_{\aux}(n)$.
The classical cost of the algorithm then can be computed by solving a recurrence
\[T(n) = aT(n/b) + T_{\aux}(n).\]

Childs et al.~adapted this framework for quantum query complexity and showed a quantum speedup for certain cases of the combining functions \cite{CKKSW25}.
For example, if combining is performed by \textsc{Or} or \textsc{And}, the quantum query complexity of the corresponding problem is upper bounded by $\Oh(T_{\Q}(n))$, where
\[T_{\Q}(n) \leq \sqrt{a}T_{\Q}(n/b) + T_{\aux}(n),\]
which in many cases can lead to a quadratic speedup over the classical algorithm.

In \cite{ABBLS25}, time-efficient quantum algorithms were proposed for divide \& conquer.
Their main technical result is proving that Grover's algorithm can accept $N$ bounded-error inputs and still have constant success probability using only $\Oh(\sqrt N \log N)$ gates, without the need for $\Oh(\log N)$ repetitions for error reduction.
They apply their algorithm to several forms of divide \& conquer, obtaining time-efficient quantum algorithms, and in some cases optimal quantum query algorithms.

For recursive versions of divide \& conquer where combining is performed by disjunction or minimization, they apply their modification of Grover's algorithm recursively (which they call constructible instances).
Since there is no need for error reduction at each recursive level, this allows to cut a factor of $\Oh((\log \log n)^{\log n})$.
However, since there is still some constant factor $c$ in the complexity of Grover's search and the depth of the recursion is $\Oh(\log n)$, they still have an additional factor of $\Oh(c^{\log n}) = \poly(n)$.

Returning to the multi-dimensional quantum walks, Jeffery and Pass gave another time-efficient algorithm for quantum divide \& conquer \cite{JP24}.
They showed a quantum decision algorithm with time complexity $\Oh(T_{\Q}(n) \log T_{\Q}(n))$ for cases when the combining function is a Boolean symmetric function.
The results of \cite{ABBLS25} and \cite{JP24} are incomparable, the former for example can solve minimization problems, while the latter can tackle any symmetric formula in a combining step.

Theorem \ref{thm:main} implies an analogous result in terms of query and time complexity:
\begin{reptheorem}{thm:dnq}
    There is a bounded-error quantum algorithm for divide \& conquer with disjunctive combining with query complexity $\Oh(T_{\Q}(n) \sqrt{\log n})$ and time complexity $\widetilde \Oh(T_{\Q}(n))$ in the QRAM model.
\end{reptheorem}
On one hand, the query complexity of this result is by a $\sqrt{\log n}$ factor weaker than in \cite{CKKSW25}.
On the other hand, the algorithm can be applied to speed up classical algorithms in RAM model.
Our algorithm also avoids the $O(c^{\log n})$ factor for the disjunctive combining compared to \cite{ABBLS25}.

We apply our result to the \textsc{Point-On-3-Lines} geometric problem: given $n$ lines in the plane, are there $3$ lines that intersect at the same point?
This problem is a classic representative of the \textsc{3-Sum-Hard} class of problems and has been studied in the context of fine-grained complexity.
While classically its complexity is $\widetilde \Theta (n^2)$ assuming the hardness of \textsc{3-Sum} \cite{GO95}, quantumly Ambainis and Larka showed an $\Oh(n^{1+\oh(1)})$ quantum algorithm \cite{AL20}, while Buhrman et al.~proved a linear lower bound conditional on the quantum version of \textsc{3-Sum} hardness \cite{BLPS22a}.
By using Theorem \ref{thm:dnq} together with geometric $(1/r)$-cuttings \cite{Cha93,CT16}, we give a refinement of their result:
\begin{reptheorem}{thm:po3l}
    There is a bounded-error quantum algorithm that solves \textsc{Point-On-3-Lines} with time complexity $\widetilde \Oh(n)$ in the QRAM model.
\end{reptheorem}

\subsection{Related work}

Our algorithm can be viewed as a combination of ideas of Jeffery's variable-time quantum walks \cite{Jef25} with Montanaro's quantum backtracking algorithm \cite{Montanaro15}.
Both of these algorithms utilize Belov's electrical network quantum walk framework \cite{Bel13, BCJKM13}.
The complexity of $\Oh(\sqrt{TD})$ in Theorem \ref{thm:main} essentially comes from the fact that in the appropriately weighted tree, the effective resistance $\mathcal R$ is $\Oh(D)$ and the total weight $\mathcal W$ of the edges of the tree is $\Oh(T)$, and the complexity then can be expressed as $\Oh(\sqrt{\mathcal R\mathcal W})$.
The result of \cite{Jef25} is very general and adapts quantum walk frameworks on arbitrary graphs to the variable times setting, also incorporating procedures with transition times being random variables.

The approach in this work is much simpler because we consider only trees and exact transition procedures with concrete running times, and the analysis of the complexity is also very short and requires only ``elementary'' quantum techniques.
This algorithm is able to compute the description of the child vertices from the description of their parents because in a tree we can store all of the (unique) computation on the path that leads to the marked vertex, while this is not possible in a general graph (as we need to uncompute the working space to allow the quantum branches from multiple paths to interfere).

\section{Preliminaries}

\subsection{Notation}

We use the notation $f(n) \sim g(n)$ when $f(n) = (1+\oh(1)) g(n)$.
When $a$ is proportional to $b$, we use the notation $a \propto b$.
We also use the notation $\mathbbm 1[A]$ for the indicator function of $A$ that is $1$ or $0$ if $A$ is true or false, respectively.

\subsection{Search in a tree}

First we give an abstract description of the problem.
Our goal is to find a \emph{marked} vertex in a rooted tree, where the children can be computed locally from their parents.
The marked vertices correspond to the solutions in this computation tree.
Since in such trees the search stops after finding a solution, we assume that marked vertices can be only the leaves of the tree.

Let $G=(V,E)$ be a rooted tree graph with the set of vertices in $V$ and edges in $E \subset V \times V$.
If a vertex $v$ is a child of a vertex $u$ in $G$, we denote such a relationship by $u \underset{G}{\to} v$.
The number of children of a vertex $u$ we denote by $\deg(u) = |\{u \mid u \underset{G}{\to} v\}|$.
Note that $\deg(u) = 0$ iff $u$ is a leaf.
For a vertex $v$, we denote by $u \underset{G}{\rightsquigarrow} v$ that a vertex $u$ is on the path from the root to $v$ in $G$ (including the root).
We denote the size of $G$ by the number of its vertices, $|G|:=|V|$, and we also use notation $v \in G$ to denote $v \in V$.

Firstly, we assume that there is a unitary oracle that tells us the number of children of a vertex $v$,
\[C\ket{v}\ket{0} \mapsto \ket{v} \ket{\deg(v)}.\]
We also have a unitary oracle that marks the solutions,
\[M\ket{v}\ket{0} \mapsto \ket{v} \ket{\mathbbm 1[\text{$v$ is marked}]}.\]

We say that a rooted tree $\mathcal T$ is a \emph{computation} tree if for a vertex $v$, we need $t_v$ steps (classical or quantum) to compute its description from the description of its parent; if $r$ is the root of $\mathcal T$, then we assume that $t_r = 0$ and $t_v > 0$ for $v \neq r$.
We will denote by $n := |\mathcal T|-1$ the number of non-root vertices of $\mathcal T$.
The complexity of our algorithms will depend on $T := \sum_{v \in \mathcal T} t_v^2$, and on the depth of $\mathcal T$ which we denote by $D$.

We distinguish two settings: \textit{known} and \textit{unknown} computation times.
In the known times setting, we assume that we know how many steps it takes to compute a child $v$ before we even start to compute its description, given only the description of its parent $u$ and the index of the child $i \in [\deg(u)]$.
In other words, we assume that we have a unitary oracle
\[O_t \ket{u} \ket{i} \ket{0} \mapsto \ket{u} \ket{i} \ket{t_v}.\]

We also abstract a step of a transition procedure by another unitary oracle.
Denote by $\ket{v_t}$ the working space of $\mathcal T$ after performing $t$ steps of transition to $v$.
Note that we have $\ket{v_0} = \ket{0}$ and $\ket{v_{t_v}} = \ket{v}$.
We denote the computation oracle by $A$ and the application of one step is given by
\[A \ket{u} \ket{i} \ket{v_t} \ket{t} \mapsto \ket{u} \ket{i} \ket{v_{t+1}} \ket{t+1} .\]

The steps do not necessarily have to be single gates of a quantum procedure; they can be abstracted as necessary (for example, a step could be a longer sequence of gates).
However, we should be able to apply any of these oracles in superposition.

In the unknown times, we only know whether the computation is done once it is actually finished.
That is, we have an oracle
\[O_f \ket{u} \ket{i} \ket{v_t} \ket{t} \ket{0} \mapsto \ket{u} \ket{i}  \ket{v_t} \ket{t} \ket{\mathbbm 1[t \geq t_v]}.\]

\subsection{Model}

We distinguish two complexities of the algorithm.
The first is the query complexity, the number of calls to the oracle $A$.
The second is the gate (or time) complexity, in which we use the standard quantum circuit model together with Quantum Random Access Gates (QRAG) (see, for example, \cite{ABDLS24}).
A QRAG gate implements the following mapping:
\[\ket{i}\ket{b}\ket{x_1,\ldots,x_N} \mapsto \ket{i}\ket{x_i}\ket{x_1,\ldots,x_{i-1},b,x_{i+1},\ldots,x_N}.\]
Here, the last register represents the memory space of $N$ bits.
Essentially, QRAG gates allow both for reading and writing memory operations in superposition.
The QRAG model is an analogue of the classical RAM model, and we will use the fact that any classical time $T$ algorithm in RAM can be implemented in $\Oh(T)$ time in the QRAG model.
Below we will just refer to our model as the QRAM model.

\subsection{Tools}

Similarly to \cite{Montanaro15}, we require a couple of tools from quantum walk analysis.

\begin{theorem}[Effective Spectral Gap Lemma \cite{SSRTM11}] \label{thm:effective-lemma}
    Let $\Pi_A$ and $\Pi_B$ be projectors on the same Hilbert space, and let $R_A = 2\Pi_A - I$, $R_B = 2\Pi_B - I$.
    Let $P_{\epsilon}$ be the projector onto the span of the eigenvectors of $R_B R_A$ with eigenvalues $\E^{2\I\theta}$ such that $|\theta| \leq \epsilon$.
    Then, for any vector $\ket{\psi}$ such that $\Pi_A\ket{\psi} = 0$, we have
    \[\norm{P_{\epsilon}\Pi_B \ket{\psi}} \leq \epsilon \norm{\ket{\psi}}.\]
\end{theorem}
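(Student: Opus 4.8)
The plan is to work entirely with the spectral decomposition of the walk operator $U := R_B R_A$ and to reduce the inequality to a pointwise estimate on each relevant eigenvector. First I would reformulate the hypothesis: since $R_A = 2\Pi_A - I$, the condition $\Pi_A\ket{\psi} = 0$ is exactly the statement that $\ket{\psi}$ is a $(-1)$-eigenvector of $R_A$, i.e.\ $R_A\ket{\psi} = -\ket{\psi}$. Next, because $U$ is a product of two unitaries it is itself unitary, so the spectral theorem provides an orthonormal eigenbasis $\{\ket{\mu}\}$ with $U\ket{\mu} = \E^{2\I\theta_\mu}\ket{\mu}$. By definition of $P_\epsilon$, those basis vectors with $|\theta_\mu| \leq \epsilon$ form an orthonormal basis of the range of $P_\epsilon$, while the rest are orthogonal to it. This lets me replace the norm by a coordinate sum, $\norm{P_\epsilon \Pi_B\ket{\psi}}^2 = \sum_{|\theta_\mu|\leq\epsilon} |\braket{\mu | \Pi_B | \psi}|^2$, so it suffices to bound each summand.

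The heart of the argument is a short per-eigenvector computation. From $R_B R_A\ket{\mu} = \E^{2\I\theta}\ket{\mu}$ and $R_B^2 = I$, I would multiply on the left by $R_B$ to get $R_A\ket{\mu} = \E^{2\I\theta} R_B\ket{\mu}$, equivalently $R_B\ket{\mu} = \E^{-2\I\theta} R_A\ket{\mu}$; taking adjoints and using that $R_A, R_B$ are Hermitian yields the key relation $\bra{\mu} R_B = \E^{2\I\theta}\bra{\mu} R_A$. Now I would write $\Pi_B = \tfrac12(I + R_B)$, so that $\braket{\mu | \Pi_B | \psi} = \tfrac12\bigl(\braket{\mu|\psi} + \bra{\mu} R_B\ket{\psi}\bigr)$, and substitute the relation together with $R_A\ket{\psi} = -\ket{\psi}$ to obtain $\bra{\mu} R_B\ket{\psi} = \E^{2\I\theta}\bra{\mu} R_A\ket{\psi} = -\E^{2\I\theta}\braket{\mu|\psi}$. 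Hence $\braket{\mu | \Pi_B | \psi} = \tfrac12\bigl(1 - \E^{2\I\theta}\bigr)\braket{\mu|\psi}$, and since $|1 - \E^{2\I\theta}| = 2|\sin\theta|$ this gives $|\braket{\mu | \Pi_B | \psi}| = |\sin\theta|\,|\braket{\mu|\psi}| \leq \epsilon\,|\braket{\mu|\psi}|$ using $|\sin\theta| \leq |\theta| \leq \epsilon$.

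Finally I would reassemble: summing the squared bound over the eigenvectors spanning the range of $P_\epsilon$ and applying Pythagoras gives $\norm{P_\epsilon \Pi_B\ket{\psi}}^2 \leq \epsilon^2 \sum_{|\theta_\mu|\leq\epsilon} |\braket{\mu|\psi}|^2 = \epsilon^2\norm{P_\epsilon\ket{\psi}}^2 \leq \epsilon^2\norm{\ket{\psi}}^2$, and taking square roots yields the claim. The only genuinely delicate point is the spectral bookkeeping rather than any inequality: I must ensure that an orthonormal eigenbasis of $U$ really exists (guaranteed by unitarity, or alternatively by Jordan's two-reflections lemma, which decomposes the space into one- and two-dimensional $R_A,R_B$-invariant blocks) and that the relation $\bra{\mu} R_B = \E^{2\I\theta}\bra{\mu} R_A$ is valid for \emph{every} eigenvector, including those lying in degenerate eigenspaces. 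Because that relation follows directly from the eigenvalue equation $U\ket{\mu} = \E^{2\I\theta}\ket{\mu}$ with no appeal to simplicity of the eigenvalue, degeneracy causes no trouble, and the remainder is the elementary calculation above.
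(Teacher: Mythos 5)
Your proof is correct. Note that the paper does not actually prove this lemma --- it imports it verbatim as a tool, citing \cite{SSRTM11} --- so there is no in-paper argument to compare against. What you have written is a valid, self-contained derivation: the reduction of $\Pi_A\ket{\psi}=0$ to $R_A\ket{\psi}=-\ket{\psi}$, the identity $\bra{\mu}R_B = \E^{2\I\theta}\bra{\mu}R_A$ obtained from $R_B^2=I$ and Hermiticity, the substitution into $\Pi_B=\tfrac12(I+R_B)$ giving $\braket{\mu|\Pi_B|\psi}=\tfrac12(1-\E^{2\I\theta})\braket{\mu|\psi}$, and the bound $|1-\E^{2\I\theta}|=2|\sin\theta|\leq 2|\theta|$ all check out, and your observation that the key relation needs only the eigenvalue equation (not simplicity of the eigenvalue) correctly disposes of the degeneracy worry. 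It is worth noting that the proof usually given in the literature for this lemma proceeds via Jordan's two-reflections lemma, decomposing the space into one- and two-dimensional invariant blocks and analyzing the principal angles between the ranges of $\Pi_A$ and $\Pi_B$ block by block; your direct per-eigenvector computation reaches the same conclusion while sidestepping that decomposition entirely, at the mild cost of obscuring the geometric picture (the relation between $\theta$ and the angles of the Jordan blocks) that the standard route makes explicit.
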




We also need the quantum phase estimation where we'll only need to distinguish eigenvalue $1$ from other eigenvalues.
This formulation follows, for example, from \cite{MNRS11,Montanaro15} by picking sufficiently large $s = \Oh(\log(1/\epsilon))$.

\begin{theorem}[Quantum Phase Estimation \cite{Kit95,CEMM98}] \label{thm:phase-estimation}
    For every unitary $U$ and a precision parameter $\epsilon > 0$, there exists a quantum circuit $C$ that 
    \begin{itemize}
        \item uses $\Oh(1/\epsilon)$ controlled-$U$ applications and $s = \Oh(\log^2(1/\epsilon))$ other gates;
        \item if $\ket{\psi}$ is an eigenvector of $U$, then
        $C\ket{\psi}\ket{0^s} = \ket{\psi}\ket{\omega}$ for some $\ket{\omega}$; if $\ket{\psi}$ is an eigenvector with eigenvalue $1$, then $\ket{\omega} = \ket{0^s}$;
        \item for any state $\ket{\phi} = \sum_k \alpha_k \ket{\psi_k}$, where $\ket{\psi_k}$ is an eigenvalue of $U$ with eigenvalue $\E^{2\I\theta_k}$,
        \[C\ket{\phi}\ket{0^s} = \sum_k \alpha_k \ket{\psi_k} \ket{\omega_k},\]
        where $\sum_{k : \theta_k \geq \epsilon} \left|\braket{\omega_k|0^s}\right|^2 \leq 1/8$.
    \end{itemize}
\end{theorem}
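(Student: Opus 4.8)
The plan is to build the textbook phase-estimation circuit from the quantum Fourier transform, tuned so that eigenvalue $1$ is detected with zero error. First I would introduce a control register of $m = \lceil \log_2(c/\epsilon) \rceil$ qubits for a suitable absolute constant $c$, prepare it in the uniform superposition $\frac{1}{\sqrt{2^m}}\sum_{j=0}^{2^m-1}\ket{j}$ by Hadamard gates, and apply the controlled power map $\sum_j \ket{j}\bra{j}\otimes U^j$. Expanding $j$ in binary, this map is realized by the controlled operations $U^{2^0}, U^{2^1}, \ldots, U^{2^{m-1}}$, which together cost $1 + 2 + \cdots + 2^{m-1} = 2^m - 1 = \Oh(1/\epsilon)$ applications of controlled-$U$, matching the claimed query count. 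Finally I would apply the inverse Fourier transform to the control register; this contributes $\Oh(m^2) = \Oh(\log^2(1/\epsilon))$ further gates, accounting for the stated gate count, and the control register (padded to $s$ qubits) serves as the ancilla carrying $\ket{\omega}$.

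The exactness at eigenvalue $1$ is then immediate from the construction. If $\ket\psi$ is an eigenvector of $U$ with eigenvalue $\E^{2\I\theta}$, the controlled power map leaves $\ket\psi$ untouched and writes the phase state $\frac{1}{\sqrt{2^m}}\sum_j \E^{2\I j\theta}\ket{j}$ into the control register, so $C$ maps $\ket\psi\ket{0^s}$ to $\ket\psi\ket\omega$ with $\ket\omega$ depending only on $\theta$. When $\theta = 0$ the phase state is exactly the uniform superposition, which is the image of $\ket{0^m}$ under the forward Fourier transform; hence the inverse transform returns it precisely to $\ket{0^m} = \ket{0^s}$, giving $\ket\omega = \ket{0^s}$ with no error.

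The main quantitative step is the tail bound. For an eigenvector with $\theta \geq \epsilon$ I would compute the amplitude that the inverse transform places on the all-zeros outcome, namely the geometric sum $\frac{1}{2^m}\sum_{j=0}^{2^m-1}\E^{2\I j\theta} = \frac{1}{2^m}\cdot\frac{\E^{2\I 2^m\theta}-1}{\E^{2\I\theta}-1}$. Bounding the numerator by $2$ and using $|\E^{2\I\theta}-1| = 2|\sin\theta|$ together with $|\sin\theta| \geq \frac{2}{\pi}|\theta| \geq \frac{2\epsilon}{\pi}$ for $\theta \in [\epsilon,\pi/2]$ yields $|\braket{\omega|0^s}| \leq \frac{\pi}{2^{m+1}\epsilon}$. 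Taking $c$ large enough that $2^m = \Theta(1/\epsilon)$ forces this to be at most $1/\sqrt{8}$, so $|\braket{\omega|0^s}|^2 \leq 1/8$ for every eigenvector with $\theta \geq \epsilon$. For a general input $\ket\phi = \sum_k \alpha_k\ket{\psi_k}$ the outputs $\sum_k \alpha_k\ket{\psi_k}\ket{\omega_k}$ are orthogonal across $k$ since the $\ket{\psi_k}$ are, so the weight landing on the all-zeros ancilla from the part with $\theta_k \geq \epsilon$ is $\sum_{k:\theta_k\geq\epsilon}|\alpha_k|^2|\braket{\omega_k|0^s}|^2 \leq \frac18\sum_k|\alpha_k|^2 \leq \frac18$, which is the stated bound.

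The step I expect to require the most care is the convention on the phase range. Since the eigenvalue is $\E^{2\I\theta}$, a phase $2\theta$ near $2\pi$ is as close to $1$ as one near $0$, so the clean separation between $\theta = 0$ and $\theta \geq \epsilon$ holds only once $\theta$ is fixed to a half-open interval such as $(-\pi/2,\pi/2]$ and the estimate is read modulo $2^m$. I would make this explicit so that the $|\sin\theta|$ lower bound is valid on the relevant range and the ``wrap-around'' eigenvalues do not spuriously contribute to the all-zeros outcome; everything else is the standard Dirichlet-kernel estimate together with a choice of constants, so I anticipate no further difficulty.
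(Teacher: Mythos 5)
The paper offers no proof of this theorem; it is imported from \cite{Kit95,CEMM98} via \cite{MNRS11,Montanaro15}. Your proof is the standard Kitaev/Cleve--Ekert--Macchiavello--Mosca construction (Hadamards on $m=\lceil\log_2(c/\epsilon)\rceil$ control qubits, controlled powers $U^{2^i}$ costing $2^m-1=\Oh(1/\epsilon)$ applications, inverse Fourier transform with $\Oh(m^2)$ gates), and its quantitative core is correct: exactness at $\theta=0$ because the uniform phase state is the Fourier image of $\ket{0^m}$, and the Dirichlet-kernel bound $|\braket{\omega|0^s}|\leq \pi/(2^{m+1}\epsilon)$ for $|\theta|\geq\epsilon$ via $|\sin\theta|\geq \tfrac{2}{\pi}|\theta|$. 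Your closing remark about normalizing $\theta$ to $(-\pi/2,\pi/2]$ is not a pedantic worry but genuinely needed here: the paper applies the theorem together with the Effective Spectral Gap Lemma, whose complement condition is two-sided ($|\theta_k|\geq\epsilon$), so the wrap-around eigenvalues must be covered, and your symmetric bound does cover them.

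One mismatch deserves explicit attention. The theorem as printed asserts $\sum_{k:\theta_k\geq\epsilon}|\braket{\omega_k|0^s}|^2\leq 1/8$ with no $|\alpha_k|^2$ weights, whereas you prove (i) the per-eigenvector bound $|\braket{\omega_k|0^s}|^2\leq 1/8$ for each $k$ with $|\theta_k|\geq\epsilon$, and (ii) the weighted bound $\sum_{k:\theta_k\geq\epsilon}|\alpha_k|^2|\braket{\omega_k|0^s}|^2\leq 1/8$. The unweighted sum cannot be bounded by any circuit: for $U=\E^{2\I\epsilon}I$ on a $d$-dimensional space every basis vector is an eigenvector with $\theta_k=\epsilon$ and the same $\ket{\omega_k}$, so the unweighted sum equals $d$ times one fixed overlap. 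The printed statement is therefore a typo, and the quantity the paper actually needs in the proof of Theorem \ref{thm:known} (where the probability of measuring $0^s$ is estimated, again with the $|\alpha_k|^2$ silently dropped) is exactly your weighted bound. Your proof establishes the theorem in the form in which it is used, so I see no remaining gap, but you should state explicitly which of (i) or (ii) you take as the official conclusion.
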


\section{Quantum walk}

In this section first we describe the quantum-walk based marked vertex detection algorithm, following Montanaro's description.
First we describe the algorithm for the setting with known transition times; the unknown time case can be simply reduced to known times, as we will see later.

\subsection{Known times}

Recall that the transition procedure $A$, when computing $v$, goes successively through the states $\ket{v_1}$, $\ldots$, $\ket{v_{t_v}}$.
Suppose that $u$ is the parent of $v$ in $\mathcal T$; we expand our tree by adding $t_v-1$ intermediate vertices in a path between $u$ and $v$.
We will denote the respective vertices by $s(v,t)$ for $t \in [t_v]$, each associated with the respective step of computation of $v$.
We call this the \emph{expanded tree} and denote it by $\mathcal E$.
For simplicity, we will still denote the root of $\mathcal E$ by $r$.
The marked vertices in the expanded tree are those $s(v,t_v)$ such that $v$ is marked in $\mathcal T$.

To define the unitary of the quantum walk, we assign a \emph{weight} function to each vertex $x \in \mathcal E$:
\[w_x = \begin{cases} \nicefrac{1}{D}, & \text{if $x = r$,} \\ t_v, & \text{if $x = s(v,t)$ for some $t$.} \end{cases}\]

Next, for each vertex $x \in \mathcal E$, we define the diffusion operator $D_x$ which acts on the subspace $\mathcal H_x$ spanned by $\ket{x}$ and its children $\ket{y}$:
\begin{itemize}
    \item if $x$ is marked, $D_x = I$;
    \item otherwise, $D_x = I-2\ket{\psi_x}\bra{\psi_x}$, where $\ket{\psi_x}$ is a unit vector defined by
 \[\ket{\psi_x} \propto \sqrt{w_x}\ket{x} + \sum_{y\,:\,x \underset{\mathcal E}{\rightarrow} y} \sqrt{w_y} \ket{y}.\]
\end{itemize}

To define the unitary operator for the steps of the quantum walk, let $A$ and $B$ be the sets of vertices of $\mathcal E$ at even and odd distance from the root, respectively.
A single step of the quantum walks applies $U = R_BR_A$, where
\[
    R_A = \bigoplus_{x \in A} D_x, \hspace{1cm}
    R_B =  \ket{r}\bra{r} + \bigoplus_{x \in B} D_x.
\]
In other words, the quantum walk alternates between applying diffusion at vertices at even and odd distance from the root.
The starting state of the walk is the root state $\ket{r}$.

Now that the unitary of the step of the quantum walk is defined, the overall quantum algorithm is as follows.
\begin{algorithm}[H] 
    \caption{Marked vertex detection algorithm.}\label{algo}

    \begin{enumerate}
        \item Apply Quantum Phase Estimation (Theorem \ref{thm:phase-estimation}) on $U$ and the state $\ket{r}$ with precision $\epsilon = \Oh(1 / \sqrt{TD})$ and measure the phase register.
        \item Accept if the measured value is $0^s$ and reject otherwise.
    \end{enumerate}
\end{algorithm}

We first examine the running time of this abstracted quantum walk algorithm and afterwards we turn to its circuit implementation for computation trees.

\begin{theorem} \label{thm:known}
    For a computation tree $\mathcal T$ with depth $D$ and known transition times $t_v$, quantum Algorithm \ref{algo} detects the presence of a marked vertex with constant success probability and uses $\Oh(\sqrt{TD})$ applications of $U$, where $T = \sum_v t_v^2$.
\end{theorem}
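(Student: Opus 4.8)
The plan is to analyze Algorithm \ref{algo} through the eigenphase structure of $U = R_B R_A$ at the starting state $\ket{r}$, treating the two cases separately: when $\mathcal T$ contains a marked vertex I will show that $\ket{r}$ has constant overlap with the $+1$-eigenspace of $U$, so phase estimation outputs $0^s$ and the algorithm accepts; when $\mathcal T$ has no marked vertex I will show that $\ket{r}$ is almost orthogonal to all eigenvectors of small phase, so the outcome avoids $0^s$ and the algorithm rejects. As preparation I would rewrite the two reflections in the projector form demanded by Theorem \ref{thm:effective-lemma}. Because the subspaces $\mathcal H_x$ over even (resp. odd) centers are mutually orthogonal and each $D_x$ negates exactly $\ket{\psi_x}$, one has $R_A = 2\Pi_A - I$ and $R_B = 2\Pi_B - I$ with $\Pi_A = I - \sum_{x \in A}\ket{\psi_x}\bra{\psi_x}$ and $\Pi_B = I - \sum_{x \in B}\ket{\psi_x}\bra{\psi_x}$, the sums ranging over the unmarked centers; the term $\ket{r}\bra{r}$ merely records that the root is never a child and so survives in the $+1$-eigenspace of $R_B$.

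For the no-marked case I would take the uniform weighted state $\ket{\psi} = \sum_{x \in \mathcal E}\sqrt{w_x}\ket{x}$ as the witness for the lemma. The projection of $\ket{\psi}$ onto each star $\mathcal H_x$ is exactly proportional to $\ket{\psi_x}$, so summing over the $A$-stars, which partition $\mathcal E$, gives $\Pi_A\ket{\psi}=0$; summing over the $B$-stars recovers every vertex except the root, so $\Pi_B\ket{\psi}=\sqrt{w_r}\ket{r}$. Theorem \ref{thm:effective-lemma} then yields $\sqrt{w_r}\,\norm{P_\epsilon\ket{r}} \le \epsilon\norm{\psi}$. Since $\norm{\psi}^2 = w_r + \sum_{v} t_v\cdot t_v = \tfrac1D + T$ and $w_r=\tfrac1D$, this gives $\norm{P_\epsilon\ket{r}} = \Oh(\epsilon\sqrt{TD})$, which the choice $\epsilon = \Oh(1/\sqrt{TD})$ drives below $\tfrac12$. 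Combining this with the phase-estimation guarantee of Theorem \ref{thm:phase-estimation} (the large-phase part contributes at most $\tfrac18$ to the $0^s$ outcome) bounds the acceptance probability by $\norm{P_\epsilon\ket{r}}^2 + \tfrac18 < \tfrac12$, so the algorithm rejects with constant probability.

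For the marked case I would instead exhibit a genuine $+1$-eigenvector of $U$ concentrated on a single root-to-marked path $r = x_0, x_1, \ldots, x_\ell = m$ in $\mathcal E$, where $m=s(v^\ast,t_{v^\ast})$ is a marked leaf. Setting $\phi_{x_0}=1$ and propagating the harmonic relation $\sqrt{w_{x_j}}\phi_{x_j} + \sqrt{w_{x_{j+1}}}\phi_{x_{j+1}} = 0$ down the path while fixing $\phi=0$ off it produces a state orthogonal to $\ket{\psi_x}$ at every unmarked internal vertex; since $D_m = I$ at the marked leaf there is no constraint there, which is precisely what allows a nonzero solution. Such a state satisfies $R_A\ket{\phi} = R_B\ket{\phi} = \ket{\phi}$, hence $U\ket{\phi}=\ket{\phi}$. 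A telescoping computation gives $|\phi_{x_j}|^2 = w_r/w_{x_j}$, so $\norm{\phi}^2 = w_r\sum_{j}1/w_{x_j}$; here $1/w_r = D$, while each original vertex on the $\mathcal T$-path contributes $\sum_{t=1}^{t_v}1/t_v = 1$, totalling $d_{\mathcal T}(m)\le D$. Therefore $|\braket{r|\phi}|^2/\norm{\phi}^2 = D/(D + d_{\mathcal T}(m)) \ge \tfrac12$, so $\ket{r}$ places at least half its weight in the $+1$-eigenspace; as $+1$-eigenvectors are mapped to $\ket{0^s}$ exactly, phase estimation accepts with probability $\ge\tfrac12$.

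Finally, the cost is the number of controlled-$U$ calls in phase estimation, namely $\Oh(1/\epsilon)=\Oh(\sqrt{TD})$. The argument carries the electrical-network reading flagged in the introduction: $\norm{\psi}^2 = \mathcal W = \Oh(T)$ is the total weight and $\sum_j 1/w_{x_j} = \mathcal R = \Oh(D)$ is the effective resistance of the root-to-marked path, with $\epsilon \asymp 1/\sqrt{\mathcal R\mathcal W}$. I expect the delicate part to be the marked case: verifying that the path-supported state is invariant under both reflections once branching children and off-path subtrees are accounted for, and checking that the choice $w_r = 1/D$ is exactly what balances the root's contribution $1/w_r = D$ against the path resistance $\le D$ to force a constant overlap, since any larger $w_r$ would let the overlap decay with depth.
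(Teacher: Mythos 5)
Your proposal is correct and follows essentially the same route as the paper: the same path-supported alternating-sign $1$-eigenvector (yours is the paper's $\ket{\phi_m}$ rescaled by $\sqrt{w_r}$) for the marked case, and the same weighted-superposition witness fed into the Effective Spectral Gap Lemma for the unmarked case, with identical norm computations yielding overlap $\ge \nicefrac{1}{2}$ and $\norm{P_\epsilon\ket{r}} = \Oh(\epsilon\sqrt{TD})$ respectively. The only cosmetic difference is that the paper finishes by noting that a constant number of repetitions separates acceptance probability $\ge \nicefrac{1}{2}$ from $\le \nicefrac{1}{4}$, a point you leave implicit.
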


\begin{proof}
First suppose there is a marked leaf $v$ in the tree $\mathcal T$, then $m := s(v,t_v)$ is marked in the expanded tree $\mathcal E$.
Examine the vector
\[\ket{\phi_m} = \sum_{x \underset{\mathcal E}{\rightsquigarrow} m} \frac{(-1)^{\ell(x)}}{\sqrt{w_x}} \ket{x},\]
where $\ell(x)$ is the distance of $x$ from the root.
First we show that $\ket{\phi_m}$ is a $1$-eigenvector of $R_A$.
Recall that $R_A$ is the direct sum of operators $D_x$ that act on subspaces $\mathcal H_x$ for $x \in A$, and these subspaces do not overlap.
It is sufficient to show that $D_x\Pi_{\mathcal H_x}\ket{\phi_m}=\Pi_{\mathcal H_x}\ket{\phi_m}$.

If $\mathcal H_x$ does not overlap with $\ket{\phi_m}$, then $D_x$ does not affect $\ket{\phi_m}$.
Otherwise, if $x$ is not a leaf, then $\mathcal H_x$ overlaps with  $2$ vertices in the support of $\ket{\phi_m}$.
We then have
\[\braket{\phi_m|\psi_x} \propto \frac{(-1)^{\ell(x)}}{\sqrt{w_x}}\cdot \sqrt{w_x} + \frac{(-1)^{\ell(x)+1}}{\sqrt{w_y}}\cdot \sqrt{w_y} = 0,\]
where $y$ is the child of $x$ on the path $x \underset{\mathcal E}{\rightsquigarrow} m$.
Since $D_x = I-2\ket{\psi_x}\bra{\psi_x}$, it acts as identity on $\Pi_{\mathcal H_x} \ket{\phi_m}$.
Finally, if $x = m$, we have $D_x = I$.
Similarly, $\ket{\phi_m}$ is also a $1$-eigenvector of $R_B$ (with $\ket{r}\bra{r}$ acting as identity on $\ket{r}$).

We then have
\[\norm{\ket{\phi_m}}^2 = \sum_{x \underset{\mathcal E}{\rightsquigarrow} m} \frac{1}{w_x} = D + \sum_{u \underset{\mathcal T}{\rightsquigarrow} v, u \neq r} t_u \cdot \frac{1}{t_u} \leq 2D\]
and
\[\frac{\left|\braket{r|\phi_m}\right|}{\norm{\ket{\phi_m}}} \geq \frac{1}{\sqrt{2}}.\]
Hence, the algorithm accepts with probability at least $1/2$ regardless of the precision.

Now consider the case when there are no marked vertices; examine the vector
\[\ket{\eta} = \sum_{x \in \mathcal E} \sqrt{\frac{w_x}{w_r}} \ket{x}.\]
Let $\Pi_A$ and $\Pi_B$ be the projectors onto the invariant subspaces of $R_A$ and $R_B$.
It is evident that $\ket{\eta}$ is proportional to $\ket{\psi_x}$ for each $x \in \mathcal E$.
Since each $D_x$ negates the state $\ket{\psi_x}$, and subspaces $\mathcal H_x$ do not overlap for $x \in A$, we have $\Pi_A \ket{\eta} = 0$.
Similarly, $\Pi_B \ket{\eta} = \sqrt{\frac{w_r}{w_r}} \ket{r} = \ket{r}$.
Denote by $P_\epsilon$ the projector onto the span of eigenvectors of $U$ with eigenvalues $\E^{2\I\theta}$ such that $|\theta| \leq \epsilon$.
Then it follows from Theorem \ref{thm:effective-lemma} that
\begin{align*}
\norm{P_\epsilon \ket{r}} 
&= \norm{P_\epsilon \Pi_B \ket{\eta}}
\leq \epsilon \norm{\ket{\eta}} 
= \epsilon \sqrt{\sum_{x \in \mathcal E} \frac{w_x}{w_r}}
= \epsilon \sqrt{1 + D \sum_{\substack{x \in \mathcal E, \, x \neq r}} w_x}
 = \epsilon \sqrt{1 + D \sum_{v \in \mathcal T} t_v^2}.
\end{align*}
The last step holds because for each $v \in \mathcal T$, $v \neq r$, there are $t_v$ vertices $v_1$, $\ldots$, $v_{t_v}$ with weight $t_v$.
Therefore, for phase estimation we can choose small enough $\epsilon = \Omega(1/\sqrt{T D})$ so that $\norm{P_\epsilon \ket{r}}^2 \leq 1/8$.
Then the total probability of measuring $0^s$ is small:
\[\sum_{k : \theta_k \leq \epsilon} \left|\braket{\omega_k|0^s}\right|^2 + \sum_{k : \theta_k \geq \epsilon} \left|\braket{\omega_k|0^s}\right|^2 \leq \norm{P_\epsilon \ket{r}}^2 + \sum_{k : \theta_k \geq \epsilon} \left|\braket{\omega_k|0^s}\right|^2 \leq 1/4.\]

We can see that $U$ is applied $\Oh(\sqrt{TD})$ times in phase estimation, plus $\Oh(\log^2(TD))$ other gates.
By repeating the phase estimation a constant number of times, we can distinguish between the marked and unmarked case by looking at the statistics, with constant error.
\end{proof}

\subsection{Implementation}

\paragraph{State of the algorithm.}
At this point we have defined the quantum walk in terms of the operator $U$, which we will now proceed to implement.
We begin by describing the overall quantum state of the algorithm.

First, we note that in the proof above we used abstract labels $s(v,t)$ for convenience, but haven't defined them yet.
We will encode $\ket{s(v,t)}$ by a state that stores the \textit{whole} computational history on the path to computing the description of $v_t$, starting from the root $r$.
Consider the original $\mathcal T$ algorithm at a point of having computed the vertices $v^1$, $v^2$, $\ldots$, $v^{k-1}$ and $t > 0$ steps of $v^k = v$.
Then $\ket{s(v,t)}$ will contain all $\ket{v^1}$, $\ldots$, $\ket{v^{k-1}}$, $\ket{v^k_t}$, with some additional registers.

Now we describe the state of the algorithm more formally.
We note that $R_A$ and $R_B$ are quite similar; first we have a single qubit register $\ket{p} \in \mathcal H_p$, storing the parity of the quantum walk step.
If it is equal to $0$, then $R_A$ should be applied; otherwise $R_B$ should be applied.
By encoding the parity in a separate qubit, we can then implement $R_A$ and $R_B$ by the same transformation, as we will see later.
Then we also store a register $\ket{k} \in \mathcal H_k$, which stores the value of the level $k$ of $\mathcal T$ where the computation is in the process.

Next, there are $D$ quadruples of registers, for each level of the tree $\mathcal T$, which encode the state of the computation on an edge in $\mathcal T$.
The meanings of the registers of the $i$-th quadruple are, in this order:
\begin{itemize}
    \item The description (work) register $\ket{w^i} \in \mathcal H_w^i$: this will contain the data describing $v^i_t$.
    \item The child index register $\ket{j^i} \in \mathcal H^i_j$: an integer in $[\deg(v^i)]$ denoting the index of the child $v^i$ of its parent $v^{i-1}$.
    If $i > k$, then $\ket{j^i} = \ket{0}$.
    \item The total computation time register $\ket{f^i} \in \mathcal H^i_f$: the number of computation steps required to fully compute $v^i$ (equal to $t_{v^i}$).
    If $i > k$, then $\ket{f^i} = \ket{0}$.
    \item The moment of computation register $\ket{m^i} \in \mathcal H^i_m$: the number of computation steps already performed to compute $v^i$, between $0$ and $t_{v^i}$.
\end{itemize}
Therefore, the whole state of the algorithm belongs to the space $\mathcal H_p\otimes\mathcal H_k \otimes \mathcal H_s$, where
\[ \mathcal H_s = \bigotimes_{i=1}^D \mathcal H^i_w \otimes \mathcal H^i_j \otimes \mathcal H^i_f \otimes \mathcal H^i_m.\]

First of all, we assume that the number of qubits for all registers is known beforehand, therefore we require:
\begin{itemize}
    \item an upper bound $D_{\max}$ on $D$, to know how many quadruples to store;
    \item the register $\ket{j^i}$ contains $\lceil \log_2 n \rceil$ qubits; if additionally an upper bound $\Delta$ on the maximum degree in $\mathcal T$ is known, then only $\lceil\log_2 \Delta\rceil$ qubits are required;
    \item an upper bound $t_{\max}$ on $t_v$; then each of the registers $\ket{f^i}$ and $\ket{m^i}$ will consist of $\lceil{\log_2(t_{\max})}\rceil$ qubits;
    \item if a step of the transition uses at most $g$ elementary gates, $\ket{w^i}$ should have $\Oh(gt_{\max})$ qubits -- the exact amount depends on the specifics of the transition procedure in the application;
    \item $\ket{k}$ contains $\lceil \log_2 D \rceil$ qubits.
\end{itemize}

The overall state in $\mathcal H_s$ at any point is supported on the ``vertex'' states of the form
\begin{equation*}
     \ket{s(v,t)} = \left(\bigotimes_{i = 1}^{k(v)-1} \ket{v^i} \ket{j^i} \ket{t_{v^i}} \ket{t_{v^i}} \right) \otimes \Biggl(\ket{v_t} \ket{j^{k(v)}} \ket{t_v} \ket{t} \Biggr) \otimes \left( \bigotimes_{i=k(v)+1}^D \ket{0} \ket{0} \ket{0} \ket{0}\right),
\end{equation*}
for the corresponding vertex $s(v,t) \in \mathcal E$, where $k(v)$ is the level of $v$ in $\mathcal T$ ($0$-based starting with $r$).
The overall state of the algorithm between applications of $U$ can be written as
\[\ket{\ell(x) \Mod 2} \ket{k(x)} \ket{x},\]
where $x = s(v,t)$ corresponds to $v_t$ and $k(x) = k(v)$.
Here, $\ell(x)$ is the distance from the root to $x$ in the expanded tree $\mathcal E$.
The overall starting state of the algorithm is simply the all-zero state.

\paragraph{Step of the walk.}

We now proceed to describe the implementation of $U$.
An application of $U$ consists of alternating the application of $R_A$ and $R_B$.
Essentially, both $R_A$ and $R_B$ apply diffusions, where a single diffusion $D_x$ acts on the subspace $\mathcal H_x$.
The difference is that $R_A$ applies $D_x$ when $\ell(x)$ is even, and $R_B$ when it is odd.
The register $\ket{p}$ encodes this parity between applications of $U$.

Therefore, we can implement a single unitary $R$ that applies the diffusions $D_x$, and flip the parity register in between calls to $R$.
This way, we have to implement only one reflection operator $R$.
\begin{algorithm}
    \caption{Implementation of $U$.}
    \bigskip
    Repeat twice:
    \begin{enumerate}
        \item Apply $R$ to the state of the algorithm.
        \item Flip $\ket{p}$.
    \end{enumerate}
\end{algorithm}

Next, we describe the overall approach to implement $R$.
In general, we will make sure that the following invariant holds:
\begin{itemize}
    \item before the application of $R_A$, $p = 0$ iff the state in $\mathcal H_s$ corresponds to a vertex $x$ with $\ell(x)$ even;
    \item before the application of $R_B$, $p = 0$ iff the state in $\mathcal H_s$ corresponds to a vertex $x$ with $\ell(x)$ odd.
\end{itemize}
This is ensured by flipping $\ket p$ when we apply a transition, either forward or backward along an edge in $\mathcal E$; and by flipping $\ket{p}$ after each application of $R$.

Our general goal is to implement the diffusion operators $D_x$.
Fix one vertex $x$; by the invariant described above, $D_x$ must affect only the basis states $\ket{\rho_x}:=\ket{0} \ket{k(x)} \ket{x}$ and $\ket{\rho_y} := \ket{1} \ket{k(y)} \ket{y}$, where $y$ is a child of $x$.
The register $\ket{p}$ now denotes whether the vertex state is $\ket{x}$ or $\ket{y}$ for a child $y$ of $x$.

The way we will implement $D_x$ is as follows: first, we will uncompute a transition step for the states $\ket{1} \ket{k(y)} \ket{y}$, essentially taking one step back in the expanded tree, except for the registers $\ket{p}$ and $\ket{j^{k(y)}}$, which we do not change.
Let $\ket{\rho_x'}$ and $\ket{\rho_y'}$ be the states $\ket{\rho_x}$ and $\ket{\rho_y}$ after this operation, respectively.
Then the states $\ket{\rho_x'}$ and $\ket{\rho_y'}$ are identical except for the registers $\ket{p}$ and possibly $\ket{j^{k(y)}}$.

Then $p = 0$ for $x$ and $p=1$ for $y$.
If $k(x) = k(y)$, then $\ket{p}$ alone identifies the state $\ket{x}$ or $\ket{y}$ and we can implement $D_x$ as acting only on $\mathcal H_p$.
If $k(x) \neq k(y)$, then $k(y)=k(x)+1$ and $j^{k(x)+1}$ is equal to the index of the child $y$ of $x$, or $0$ for $x$.
In that case these two registers uniquely identify the basis state of $\mathcal H_x$ and we can implement $D_x$ as acting on these two registers only.
Thus, next we implement the required diffusion on $\mathcal H_p \otimes \mathcal H^{k(x)+1}_j$.
Afterwards, controlled on $p = 1$, we will apply the transition step in the forward direction to fix the overall state of the algorithm.

The algorithm acts differently depending on whether it needs to switch across the levels of $\mathcal T$.
This can happen in two cases: first, when $m^k=f^k$ and we need to compute a step ($p=0$); and second, when $t = 1$ and we need to uncompute a step ($p=1$).
This condition is first computed in a Boolean value $b$ in the beginning, and afterwards we uncompute it using the same condition, as it will stay invariant.

Therefore, the general structure is as follows:
\begin{algorithm}[H]
    \caption{Implementation of $R$.}
    \begin{enumerate}
        \item Compute $b$.
        \item \label{itm2} Controlled on $p = 1$, uncompute one step in the expanded tree.
        \item \label{itm3} Apply the corresponding operators $D_x$ to $\mathcal H_p \otimes \mathcal H^{k(x)+1}_j$.
        \item \label{itm4} Controlled on $p = 1$, compute one step in the expanded tree.
        \item Uncompute $b$.
    \end{enumerate}
\end{algorithm}

The computation of $b$ already has been explained above; we will explain the implementation of the other steps.
\begin{itemize}
    \item \textbf{Step \ref{itm2}: reversing a step of the computation.}
    Here we can just apply the operations of Step \ref{itm4} in reverse.
    
    \item \textbf{Step \ref{itm3}: performing the diffusion.} 
    There are two cases to consider, depending on the value of $b$.

    If $b=0$, then $x$ has a single child $y$, corresponding to applying the next computation step.
    Then to implement $D_x$, we need to generate the unit state $\ket{\psi_x} \propto \ket{x}+\ket{y}$.
    In our implementation, this means generating $\ket{\psi_x'} \propto \ket{0} +\ket{1} \in \mathcal H_p$ and $I-2\ket{\psi_x'}\bra{\psi_x'}$ is just $-X$ operation applied on $\ket{p}$.

    If $b=1$, then $x = u_{t_u}$ for some $u \in \mathcal T$, and 
    $\ket{\psi_x} \propto \sqrt{w_u}\ket{u} + \sum_{i \in [\deg(u)]} \sqrt{w_v} \ket{v}$, where $v$ is the $i$-th child of $u$.
    In our implementation, this means preparing the unit state $\ket{\psi_x'} \propto \sqrt{w_u}\ket{0}\ket{0} + \sum_{i \in [\deg(u)]} \sqrt{w_v} \ket{1}\ket{i} \in \mathcal H_p \otimes \mathcal H^{k(x)+1}_j$.
    Provided we can generate such a state $\ket{\psi_x'}$ from $\ket{0}$ with some circuit $S$, we can then implement $I - 2\ket{\psi_x'}\bra{\psi_x'}$ in a standard way as follows.
    First apply $S^{\dagger}$ to the given state, transforming any $\ket{\psi_x'}$ component to $\ket{0}$.
    Then flip the phase controlled on the state being $\ket{0}$, and apply $S$ back to the whole state.

    In general, $S$ must be implemented using the oracles $C$ and $O_t$.
    As the input, $C$ will take the description of the vertex in $\mathcal H_s$.
    In the context of query complexity, the complexity of $S$ doesn't matter as it never queries $A$.
    Time-wise, $S$ cannot be implemented efficiently without QRAM in the general case, for instance if the relevant $t_v$ are stored in memory.
    
    \item \textbf{Step \ref{itm4}: applying a step of the computation.}
    
    If $p = 1$, we have to take a step in the tree.

    If $b=1$, we need to start computing a vertex on the level $k+1$, whose index is currently stored in $\ket{j^{k+1}}$ after the diffusion.
    Hence we increment $\ket{k}$ by $1$ and add $t_{v^{k+1}}$ of the corresponding child to $f^{k+1}$ using the oracle $O_t$.

    Afterwards regardless of the value of $b$ we perform one computation step, applying $A$ on $\mathcal H^{k-1}_w \otimes \mathcal H^k_j \otimes \mathcal H^k_w \otimes \mathcal H^k_m$ (with the possibly updated value of $k$).
\end{itemize}

We can see that all of the steps in the algorithm can be implemented efficiently in the QRAM model with the exception of the circuit $S$.
Note that many times the algorithm uses $\ket{k}$ to perform operations on the $k$-th quadruple in $\mathcal H_s$: this is also efficient in the QRAM model.
Regarding the query complexity, each application of $R$ uses two queries to $A$, thus $U$ uses four queries to $A$ in total.
Therefore, we have the following statement.

\begin{claim} \label{thm:step}
    The unitary $U$ uses $4$ queries to each of $A$ and $S$, and can be implemented in $\Oh(\poly \log(t_{\max}, \Delta, D))$ additional gates in the QRAM model.
\end{claim}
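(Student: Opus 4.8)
The plan is to verify the claim by carefully accounting for the resources used in each of the three steps of Algorithm~3 (the implementation of $R$), and then doubling these counts since Algorithm~2 applies $R$ twice per application of $U$. The statement has two parts: a query-complexity bound (four queries each to $A$ and to $S$) and a gate-complexity bound ($\Oh(\poly\log(t_{\max},\Delta,D))$ additional gates in the QRAM model). I would treat these separately, since the query bound is essentially a counting argument while the gate bound requires checking that every auxiliary operation is efficiently implementable.

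First I would establish the query count. The key observation is that the oracle $A$ (one computation step) is invoked only in Steps~\ref{itm2} and~\ref{itm4}: Step~\ref{itm4} applies $A$ once (controlled on $p=1$), and Step~\ref{itm2} applies $A^{\dagger}$ once, since reversing a computation step is exactly running Step~\ref{itm4} backward. Thus a single application of $R$ makes two queries to $A$. The circuit $S$, which prepares the diffusion state $\ket{\psi_x'}$, appears in Step~\ref{itm3}: implementing the reflection $I-2\ket{\psi_x'}\bra{\psi_x'}$ uses $S^{\dagger}$ once and $S$ once, for two queries to $S$ per application of $R$. Since $U$ applies $R$ twice (Algorithm~2), we get exactly $4$ queries to each of $A$ and $S$. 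I would note that the computation and uncomputation of the Boolean $b$ (Steps~1 and~5), the increment of $\ket k$, and the update of $\ket{f^{k+1}}$ via $O_t$ make no queries to $A$ or $S$, so they do not contribute to the query count.

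Next I would bound the gate complexity of all the remaining operations. Here I would go through the auxiliary steps one at a time and argue each costs only $\poly\log(t_{\max},\Delta,D)$ gates in the QRAM model. Computing $b$ amounts to comparing $\ket{m^k}$ with $\ket{f^k}$ (the $b=0$ branch) and checking whether $t=1$ (the $b=1$ branch); these are comparisons on registers of $\Oh(\log t_{\max})$ qubits, selected according to the level register $\ket k$ of $\Oh(\log D)$ qubits. The state preparation internal to $S$ acts on $\mathcal H_p \otimes \mathcal H^{k(x)+1}_j$, whose dimension is governed by the maximum degree, so $S$ touches $\Oh(\log \Delta)$ qubits; crucially, the claim counts $S$ as a black box (its internal gate cost is application-specific and excluded), so only the wiring around it matters here. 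The level increment of $\ket k$, the $-X$ reflection in the $b=0$ branch, and the phase flip controlled on $\ket{0}$ are all elementary. The subtle point that makes everything efficient is that operations are performed on ``the $k$-th quadruple'' where $k$ is itself quantum: selecting and acting on a register indexed by a superposed value of $\ket k$ is exactly what the QRAG gate enables, and I would invoke the QRAM model assumption to assert each such indexed access costs $\poly\log$ gates.

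I expect the main obstacle to be the gate-complexity half, specifically the careful justification that every operation conditioned on the quantum level register $\ket k$ (reading the correct quadruple's $\ket{w^i},\ket{j^i},\ket{f^i},\ket{m^i}$, and writing back) can be realized in polylogarithmic overhead. This is precisely where the QRAG gate is essential, and it must be invoked cleanly rather than hand-waved: a naive implementation iterating over all $D$ levels would incur a factor of $D$ rather than $\log D$. I would emphasize that, because the registers for each level are stored in addressable memory, the indexed read/write operations reduce to a constant number of QRAG gates plus $\poly\log$ local logic, keeping the total within $\Oh(\poly\log(t_{\max},\Delta,D))$. The query count, by contrast, is a routine tally and should follow immediately from inspecting which steps call $A$ and $S$.
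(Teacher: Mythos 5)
Your proposal is correct and follows essentially the same route as the paper: it tallies two queries to $A$ per application of $R$ (one each in Steps~\ref{itm2} and~\ref{itm4}) and two to $S$ (via $S^\dagger$ and $S$ in Step~\ref{itm3}), doubles these for the two applications of $R$ inside $U$, and attributes the remaining gate cost to polylogarithmic-size register operations with QRAG-based addressing of the $k$-th quadruple. The paper's own justification is exactly this accounting, stated informally in the implementation discussion preceding the claim.
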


\subsection{Unknown times} \label{sec:unknown}

By reducing this setting to the known times, we obtain the following result:

\begin{theorem} \label{thm:unknown}
    For a computation tree $\mathcal T$ with depth $D$ and unknown transition times $t_v$, there exists a bounded-error quantum algorithm that detects the presence of a marked vertex and uses $\Oh(\sqrt{TD\log t_{\max}})$ applications of $U$, where $T = \sum_v t_v^2$ and $t_{\max} = \max_v t_v$.
\end{theorem}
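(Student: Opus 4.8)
The plan is to reduce the unknown-times setting to the known-times setting of Theorem \ref{thm:known} by transforming $\mathcal T$ into a new computation tree $\mathcal T'$ whose transition times are \emph{known} powers of two, at the cost of inflating the depth by a factor of $\Oh(\log t_{\max})$ while leaving $T$ essentially unchanged. Concretely, I would replace each edge $u \to v$ of $\mathcal T$ by a \emph{doubling gadget}: a path of guess-vertices $v^{(0)}, v^{(1)}, v^{(2)}, \ldots$, where reaching $v^{(j)}$ from its predecessor costs $2^j$ steps of the original procedure $A$, so that upon arriving at $v^{(j)}$ we have executed $G_j = 2^{j+1}-1$ steps of the transition toward $v$ in total. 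Every $v^{(j)}$ has a single child $v^{(j+1)}$ as long as $O_f$ reports that the computation is unfinished (i.e.\ $G_j < t_v$); at the first level $j^\star$ where $O_f$ reports completion, the vertex $v^{(j^\star)}$ becomes the representative of $v$ and branches into the level-$0$ guess-vertices of the children of $v$ in $\mathcal T$ (or it is marked, if $v$ is a marked leaf). Since $O_f$ halts the gadget at $j^\star = \Oh(\log t_v)$, the path for $v$ never grows beyond $\Oh(\log t_{\max})$ edges, and the representatives $v^{(j^\star)}$ are in bijection with the vertices of $\mathcal T$, so $\mathcal T'$ contains a marked vertex iff $\mathcal T$ does.

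The two quantities governing Theorem \ref{thm:known} are then easy to bound. Because reaching $v^{(j)}$ costs the known time $2^j$, the gadget for $v$ contributes $\sum_{j=0}^{j^\star} (2^j)^2 = \Oh(4^{j^\star}) = \Oh(t_v^2)$ to $T' = \sum_{x \in \mathcal T'} \tau_x^2$ (a geometric series dominated by its last term, using $2^{j^\star} = \Oh(t_v)$ since level $j^\star-1$ was uncompleted); summing over all $v$ yields $T' = \Oh(T)$. Meanwhile each original edge is subdivided into at most $\Oh(\log t_{\max})$ new edges, so $D' = \Oh(D \log t_{\max})$. Feeding $\mathcal T'$ into Theorem \ref{thm:known} therefore gives an algorithm making $\Oh(\sqrt{T'D'}) = \Oh(\sqrt{T D \log t_{\max}})$ applications of the walk operator; the logarithm enters only under the square root precisely because it is paid in the depth rather than in $T$.

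It then remains to verify that the known-times oracles for $\mathcal T'$ are simulated with $\Oh(1)$ overhead from the unknown-times oracles for $\mathcal T$. The time oracle $O_t'$ is trivial, returning the power of two $2^j$ dictated by the gadget, with no access to $t_v$. The transition oracle $A'$ is literally a single step of $A$, so the application counts of the walk operator and the query counts to $A$ agree up to a constant. The child-count oracle $C'$ at $v^{(j)}$ queries $O_f$ once and returns a single child if the computation is unfinished, and otherwise $\deg(v)$ children via $C$; the marking oracle $M'$ simply forwards $M$. Each of these uses $\Oh(1)$ queries per vertex.

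The main obstacle I expect is ensuring the simulation is \emph{faithful}, i.e.\ that discretizing the unknown stopping time into powers of two does not corrupt the vertex descriptions. The delicate point is that when $G_{j^\star} > t_v$ the gadget runs $A$ for $G_{j^\star} - t_v$ steps \emph{past} completion, so for the branching to the children of $v$ to use the correct description $\ket{v}$ one relies on the convention $\ket{v_t} = \ket{v}$ for all $t \geq t_v$, i.e.\ that a halted computation idles. Granting this natural halting assumption, which is already implicit in the oracle definitions where $\ket{v_{t_v}} = \ket{v}$, the reduction is exact and Theorem \ref{thm:known} applies to $\mathcal T'$ verbatim, giving the claimed $\Oh(\sqrt{T D \log t_{\max}})$ bound.
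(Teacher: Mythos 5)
Your proposal is correct and follows essentially the same route as the paper: replace each unknown-length transition by an exponentially-doubling path of known-time segments $1,2,4,\ldots$, bound the new $T'$ by the geometric series $\sum 4^j = \Oh(t_v^2)$, note the depth inflates by $\Oh(\log t_{\max})$, and invoke Theorem \ref{thm:known}. The only (minor) divergence is in handling the overshoot past $t_v$: you assume the procedure idles once finished, whereas the paper explicitly stores the index of the completing step in an extra register and applies the identity for the remaining steps of the block, which makes the padding reversible without any assumption on $A$'s behaviour beyond step $t_v$.
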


\begin{proof}
Suppose that a vertex $v$ needs an unknown time $t_v$ to be calculated.
Then we can run a version of \emph{exponential search}, running $A$ on $v$ for $1$, $2$, \ldots, $2^a$ steps until $v$ is computed, for increasing powers of $2$.
Here, $a$ is the largest integer such that $1+2+\ldots+2^a \geq t_v$.
Therefore, we have $a = \Oh(\log t_v)$.

If $A$ has calculated $v$ before the $2^a$-length block has expired, then we store an index of this step in a separate register, and for all remaining steps just apply the identity.
When uncomputing this block, this register determines when to start the uncomputation.

In total, we have replaced one vertex with $a+1$ vertices such that the sum of their squared times is equal to $1^2+2^2+\ldots+2^{2a} = \Oh(t_v^2)$.
The depth of the computation tree got multiplied by a factor of $\Oh(\log t_{\max})$.
Hence, by Theorem \ref{thm:known}, the total number of calls to $U$ is $\Oh(\sqrt{TD\log t_{\max}})$.
\end{proof}

We note that the implementation of $S$ in this case is quite simple.
Examine one vertex $v \in \mathcal T$ and let $h_v := 1+2+\ldots+2^a \geq t_v$ be the total number of vertices in $\mathcal E$ on the path on computing $v$.
\begin{itemize}
    \item First examine a vertex $x = s(v,t)$ for some $t < h_v$ and let $y = s(v,t+1)$.
    If $t = 1+2+\ldots+2^b$ for some $b < a$, then
    \[\ket{\psi_x} \propto \sqrt{2^b}\ket{x} +\sqrt{2^{b+1}}\ket{y} \propto \ket{x}+\sqrt{2}\ket{y}.\]
    Then in Step \ref{itm3} we only need to prepare a state proportional to $\ket{0} +\sqrt{2}\ket{1} \in \mathcal H_p$, which can be implemented using an appropriate rotation gate.
    For other $t \in [1,h_v)$, we have $\ket{\psi_x} \propto \ket{x}+\ket{y}$, so we need to generate a state proportional to $\ket{0}+\ket{1}$, which we can do with one Hadamard gate.
    \item If $x = s(v,h_v)$, then we need to generate
    \[\ket{\psi_x} \propto \sqrt{2^{h_u}}\ket{s(u,h_u)} + \sum_{v\,:\,u \underset{\mathcal T}{\rightarrow} v} \ket{s(v,1)}.\]
    Then in Step \ref{itm3} we have to prepare a state proportional to $\sqrt{2^{h_u}}\ket{0}\ket{0}+\ket{1}\sum_{i=1}^{\deg(u)} \ket{i}  \in \mathcal H_p \otimes \mathcal H^{k(x)+1}_j$.
    First we generate the state $\ket{\rho} \propto \sqrt{2^{h_u}} \ket{0} + \ket{1}$ using an appropriate rotation matrix.
    There are $\Oh(\log t_{\max})$ such rotations, which are known beforehand.
    Then we generate the uniform state $\ket{\chi} \propto \sum_{i=1}^{\deg(u)} \ket{i}$ on another $d = \lceil \log_2 \Delta \rceil$ qubits controlled on $\ket{\rho}$ being $\ket{1}$, using $\Oh(d) = \Oh(\log \Delta)$ gates as in, for example, \cite{SV24}.

    Similarly we can also generate the required $\ket{\psi_r}$ for the root vertex.
\end{itemize}

\section{Applications}

\subsection{Variable time search}

\begin{theorem} \label{thm:vts-algo}
    Let $x_1, \ldots, x_n \in \{0,1\}$ be variables where $x_i$ is computed by a (classical or quantum) procedure $A_i$ in $t_i$ steps, where $t_i$ is unknown beforehand, and let $T = \sum_i t_i^2$.
    Suppose that we are given upper bound estimates $T'$ and $t'$ such that $T \leq T' = \Oh(T)$ and $t_{\max} \leq t' = \poly(t_{\max})$.
    Then there is a bounded-error quantum algorithm that detects a $j$ such that $x_j=1$ using $\Oh(\sqrt{T \log t_{\max}})$ queries to steps of the procedures $A_i$.
\end{theorem}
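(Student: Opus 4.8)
The plan is to cast variable-time search as marked-vertex detection on the depth-one \emph{star} computation tree of Figure \ref{fig:trees}(a) and apply Theorem \ref{thm:unknown} almost verbatim. I take the root $r$ to have $n$ children $1,\ldots,n$; ``computing'' child $i$ simply means running the procedure $A_i$, so the transition time of child $i$ is exactly $t_i$, and I declare child $i$ marked iff $x_i=1$. Under this dictionary the framework's oracles are furnished directly by the VTS data: the degree oracle $C$ returns $\deg(r)=n$ and $\deg(i)=0$; one application of the transition oracle $A$ is one step of $A_i$ (in superposition over $i$); the completion oracle $O_f$ is the signal that $A_i$ has halted (equivalently, Ambainis's $\star$/output convention); and the marking oracle $M$ reports $x_i$. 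The diffusion circuit $S$ is exactly the elementary one worked out at the end of Section \ref{sec:unknown}, since the root's $n$ children are treated uniformly.

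For this tree the depth is $D=1$ and $T=\sum_i t_i^2$, and the times $t_i$ are unknown, so Theorem \ref{thm:unknown} immediately yields a bounded-error algorithm that detects the presence of a marked child -- i.e.\ decides whether some $x_j=1$ -- using $\Oh(\sqrt{TD\log t_{\max}})=\Oh(\sqrt{T\log t_{\max}})$ queries to $A$, hence to steps of the $A_i$. The success probability is the constant guaranteed there, and Claim \ref{thm:step} controls the per-step gate overhead.

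The only genuine thing to check is that the supplied estimates $T'$ and $t'$ suffice in place of the true (unknown) parameters. They enter in three mild ways: the time registers $\ket{f^i},\ket{m^i}$ and the number of exponential-search doubling segments must be sized by $\lceil\log_2 t'\rceil=\Oh(\log t_{\max})$; the phase-estimation precision is set to $\epsilon=\Theta\!\big(1/\sqrt{T'\log t'}\big)$; and the query bound reads $\Oh(\sqrt{T'\log t'})$. Because $T'=\Oh(T)$ and $t'=\poly(t_{\max})$ give $\log t'=\Oh(\log t_{\max})$, all of these coincide with the true-parameter versions up to constants, so the final count is $\Oh(\sqrt{T\log t_{\max}})$. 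I expect no serious obstacle here; the one point to verify carefully is that $T\le T'=\Oh(T)$ and $t_{\max}\le t'=\poly(t_{\max})$ are tight enough to preserve both the rejection precision of Theorem \ref{thm:known} in the unmarked case and the register allocation, which is exactly what these hypotheses provide.

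If in addition one wants an explicit witness $j$ rather than merely to decide existence, it can in principle be read from the child-index register $\ket{j^1}$ of the accepted eigenstate $\ket{\phi_m}$, on whose non-root support $\ket{j^1}=\ket{j}$. The subtlety -- and the place where the real difficulty would lie -- is that the root carries weight $1/D$ and so dominates $\ket{\phi_m}$ whenever the marked path is short, making the informative (non-root) outcome have probability only $\Theta(\log t_j/\log t_{\max})$; amplifying this without spending an extra logarithmic factor is delicate. This is not needed for the decision statement as phrased, which follows purely from the reduction above.
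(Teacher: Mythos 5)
Your proposal is correct and follows essentially the same route as the paper: model VTS as a star (depth-one) computation tree with the $i$-th child having transition time $t_i$ and marked iff $x_i=1$, then invoke Theorem \ref{thm:unknown} with $D=1$ to get $\Oh(\sqrt{T\log t_{\max}})$ queries. Your additional remarks on how $T'$ and $t'$ enter the register sizing and phase-estimation precision are consistent with the paper's setup and do not change the argument.
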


\begin{proof}
    We can represent this problem as a star tree with $n+1$ vertices, where the computation time of the $i$-th child is $t_i$ and the child is marked if $x_i = 1$, see Figure \ref{fig:vts}.
    Then the required immediately follows from Theorem \ref{thm:unknown}, since $D = 1$.
    \begin{figure}[H]
    \begin{center}
        \begin{tikzpicture}[scale=0.9,
          level 1/.style={sibling distance=1.5cm, level distance=2cm},
          vertex/.style={circle, fill=black, inner sep=1.5pt}
        ]
          \node[vertex] {}
            child {node[vertex] {} edge from parent node[xshift=-0.7cm, yshift=-0.1cm] {$t_1$}}
            child {node[vertex] {} edge from parent node[xshift=-0.4cm, yshift=-0.1cm] {$t_2$}}
            child {node[vertex] {} edge from parent node[right, yshift=-0.1cm] {$\ldots$}}
            child {node[vertex] {} edge from parent node[right, yshift=-0.1cm] {}}
            child {node[vertex] {} edge from parent node[xshift=0.7cm, yshift=-0.1cm] {$t_n$}};
        \end{tikzpicture}
        \caption{The computation tree of variable time search.}
        \label{fig:vts}\qedhere
    \end{center}
    \end{figure}
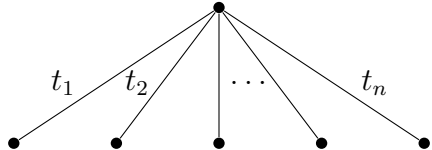
\end{proof}

\paragraph{Optimal query complexity.}

Next we adapt this algorithm to achieve optimal asymptotic query complexity.

\begin{theorem} \label{thm:vts}
    Suppose that we are given upper bound estimates $T'$ and $t'$ such that $T \leq T' = \Oh(T)$ and $t_{\max} \leq t' = \poly(t_{\max})$.
    Then the quantum query complexity of VTS with unknown times is 
    \[\Theta(\sqrt{T \log \min(n, t_{\max})}).\]
\end{theorem}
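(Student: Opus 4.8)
The plan is to establish Theorem~\ref{thm:vts} by proving matching upper and lower bounds, each split into two regimes according to whether $t_{\max}$ or $n$ is smaller, so that the final complexity is $\Theta(\sqrt{T\log\min(n,t_{\max})})$.

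For the \emph{upper bound}, I would start from Theorem~\ref{thm:vts-algo}, which already gives query complexity $\Oh(\sqrt{T\log t_{\max}})$; this settles the case $t_{\max}\le n$, since then $\min(n,t_{\max})=t_{\max}$. The remaining case is $n<t_{\max}$, where the goal is $\Oh(\sqrt{T\log n})$. The natural idea, hinted at in the introduction, is a \emph{step-grouping} (coarsening) argument: instead of treating each individual step of a procedure $A_i$ as a single step of the computation tree, bundle roughly $t_{\max}/n$ consecutive steps into one ``super-step.'' After this regrouping each procedure has at most $\Oh(n)$ super-steps, so the effective $t_{\max}$ becomes $\Oh(n)$ and applying Theorem~\ref{thm:vts-algo} to the coarsened tree yields the $\log n$ factor. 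The care needed here is to verify that grouping does not blow up $T=\sum_i t_i^2$ by more than a constant factor: if the new cost of procedure $i$ is $\lceil t_i/(t_{\max}/n)\rceil$ super-steps, then each super-step still costs $\Oh(t_{\max}/n)$ queries, and one checks that $\sum_i (\text{new cost})\cdot(\text{cost per super-step})^2$ stays $\Oh(T)$, possibly using the estimates $T'$ and $t'$ to choose the grouping granularity. I would phrase this as a clean reduction to a new instance of VTS with $n$ procedures each having at most $\Oh(n)$ grouped steps.

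For the \emph{lower bound}, I would invoke the $\Omega(\sqrt{T\log T})$ bound of \cite{AKV23} and argue that its hard instances can be tuned so that the relevant logarithm is $\log\min(n,t_{\max})$ rather than $\log T$. The text states that ``the lower bound of \cite{AKV23} can be adjusted to be tight in both cases,'' so the plan is to exhibit two families of hard instances: one in which all $t_i$ are comparable (so $T\approx n\,t_{\max}^2$ and the hard instance has few distinct large values), giving $\Omega(\sqrt{T\log t_{\max}})$ when $t_{\max}\le n$; and a second, dual family with many procedures of small times giving $\Omega(\sqrt{T\log n})$ when $n<t_{\max}$. In each regime the active logarithmic factor is $\log\min(n,t_{\max})$, matching the upper bound.

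The main obstacle I anticipate is the lower-bound adaptation, not the upper bound: showing that the adversary/information-theoretic argument of \cite{AKV23} still produces the full $\sqrt{\log}$ factor after the parameters are forced into a regime where $\min(n,t_{\max})$ is the binding quantity. In particular, one must confirm that restricting to instances with either $t_{\max}\le n$ or $n<t_{\max}$ does not collapse the $\log$ gap that their construction exploits. The upper-bound coarsening, by contrast, should be a routine rescaling once the grouping granularity is chosen and the $\Oh(T)$ bound on the regrouped weight sum is verified.
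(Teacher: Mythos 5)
Your proposal matches the paper's proof: the upper bound is exactly the step-grouping reduction with $k=t_{\max}/n$ fed back into Theorem~\ref{thm:vts-algo} (the quantity to verify is $k^2\sum_i \lceil t_i/k\rceil^2 = \Oh(T)$, not the mixed expression you wrote, but your stated intent is the right one), and the lower bound is the $\Omega(\sqrt{T\log T})$ bound of \cite{AKV23}. The obstacle you anticipate on the lower-bound side does not arise: no new or "dual" hard families are needed, since the existing construction of \cite{AKV23} already has $n=\Theta(T/\log T)$ and $t_{\max}=\Theta(\sqrt[4]{T/\log T})$, so both $\log n$ and $\log t_{\max}$ are $\Theta(\log T)$ and their bound is already $\Omega(\sqrt{T\log\min(n,t_{\max})})$.
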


\begin{proof}
    First we prove the lower bound.
    It has been proven in \cite{AKV23} that $\Omega( \sqrt{T \log T })$ queries are necessary.
    The parameters of their construction are $n = \Theta(T/\log T)$ and $t_{\max} = \Theta(\sqrt[4]{T/\log T})$.
    Hence both $\log n$ and $\log t_{\max}$ are $\Theta(\log T)$ and the required follows.

    For the upper bound, first we have an $\Oh(\sqrt{T \log t_{\max}})$ algorithm from Theorem \ref{thm:vts-algo}.
    It remains to show a query algorithm with complexity $\Oh(\sqrt{T \log n})$, for example, when $n < t_{\max}$.
    For each item $i$, treat each $k = t_{\max}/n$ consecutive steps of the computation as just one step (if the last block contains less than $k$ original steps, perform the identity operation for the remaining ones).
    Then we have $t_i' = \Oh(t_i/k)$ for all $i$ and $T' = \sum_i t_i'^2 = \Oh(T/k^2)$.
    By applying Theorem \ref{thm:vts-algo}, we get query complexity
    \[\Oh(k\sqrt{T' \log t'_{\max}}) = \Oh\left(\frac{t_{\max}}{n}\sqrt{\frac{Tn^2}{t_{\max}^2}\log n}\right) = \Oh(\sqrt{T \log n}).\qedhere\]
\end{proof}

\paragraph{Weighing choices.}

A discussion is fitting at this point about the choice of weights for the search in a tree with unknown variable times (also see Table 1 in \cite{Jef25}). 
It is interesting to compare our choice of weights to the ones of \cite{Jef25,Cor25}, since they also deal with exploring a path of unknown length and use similar ideas.
Consider the computation tree of VTS, where the path to $x_i$ contains $t_i$ vertices.
Suppose that the $j$-th vertex on the $i$-th path has weight $w_{i,j} := w_{s(x_i,j)}$.
Informally, it follows from the electrical network framework \cite{Bel13} that the complexity of the quantum walk is $\Oh(\sqrt{\mathcal R \mathcal W})$, where $\mathcal R = \max_{i=1}^n R_i$ for $R_i = \sum_{j=1}^{t_i} 1/w_{i,j}$ and $\mathcal W = \sum_{i=1}^n W_i$ for $W_i = \sum_{j=1}^{t_i} w_{i,j}$.

Consider one of these paths and the effect of the choice of weights on the complexity.
For simplicity, let its length be $t$, the weights $w_1$, $\ldots$, $w_t$, and denote corresponding $R_i$ and $W_i$ by $R$ and $W$.
Examine the following four options:
\begin{enumerate}
    \item \label{itm:opt1} If we know $t$ beforehand, then we can pick $w_i = t$ for all $i$.
    In this case $R = 1$ and $W = t^2$.
    This corresponds to the choice of weights in Theorem \ref{thm:known}.
    \item \label{itm:opt2} For unknown $t$, the choice of weights in Theorem \ref{thm:unknown} is $w_1 = 1$, $w_2, w_3 = 2$, $\ldots$ in blocks of length $2^i$, in which case $R \sim \log_2 t$ and $W \sim \frac{4}{3} \cdot t^2$.
    \item \label{itm:opt3} For unknown $t$, the choice of weights in \cite{Jef25,Cor25} is $w_i = i$, then $R \sim \ln t$ and $W \sim \frac 1 2 \cdot t^2$.
    \item \label{itm:opt4} For unknown $t$, we can also pick $w_i = 1$ for all $i$.
    This results in $R = W = t$.    
\end{enumerate}

Option \ref{itm:opt3} is better than Option \ref{itm:opt2} by constant factors; Option \ref{itm:opt2} is more gate-friendly, because it is easy to construct a small circuit that performs the relevant reflections that can be implemented by just a constant number of standard gates as described in Section \ref{sec:unknown}.
Then we only need to check whether the number of the current step is equal to $2^j$, and controlled on this condition, apply the reflection above.
This requires no use of QRAM, as is needed in Option \ref{itm:opt3} to fetch the gate to perform the required step-specific reflection.
Of course, whether we need QRAM for the whole algorithm still depends on whether we need QRAM to implement other oracles.
Option \ref{itm:opt2} also conceptually shows how the standard classical exponential search procedure is ``quantized'' in our algorithm. 

Option \ref{itm:opt1} results in an algorithm with complexity $\Oh(\sqrt{\sum_i t_i^2})$, which corresponds to VTS with known times.
Options \ref{itm:opt2} and \ref{itm:opt3} result in complexity $\Oh(\sqrt{\sum_i t_i^2 \log t_{\max}})$, corresponding to VTS with unknown times.
Option \ref{itm:opt4} is curious, as it has complexity $\Oh(\sqrt{\sum_i t_i \cdot t_{\max}})$.
In fact, it can be advantageous if $n = 1$, in which case the complexity is just $\Oh(t)$, compared to $\Oh(t \log t)$ with Options \ref{itm:opt2} and \ref{itm:opt3}.
Amusingly enough, this is just a classical algorithm of time $t$ simulated by a quantum walk!
In fact, the quantum walk becomes fully classical, as the reflections used at the vertices of the path are just $-X$ operations, propagating an absolute amplitude of $1$ from the root to the end of the path.

\subsection{Divide \& conquer}

Recall that we consider divide \& conquer algorithms with input size $n$, which partitions a problem into $a$ subproblems of size $n/b$, for some constants $a$ and $b$.
We assume that a subproblem of size $n$ can be partitioned by an exact classical or quantum procedure in $T_{\aux}(n)$ steps.
We assume that the function $T_{\aux}$ is known, since provable running times are achieved with such an assumption. 
Then we defined $T_{\Q}(n)$ by a recurrence
\[ T_{\Q}(n) \leq \sqrt{a}T_{\Q}(n/b) + T_{\aux}(n).\]

\begin{theorem} \label{thm:dnq}
    There is a bounded-error quantum algorithm for divide \& conquer with disjunctive combining with query complexity $\Oh(T_{\Q}(n) \sqrt{\log n})$ and time complexity $\widetilde \Oh(T_{\Q}(n))$ in the QRAM model.
\end{theorem}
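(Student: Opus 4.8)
The plan is to cast the recursion directly as a computation tree $\mathcal T$ and invoke the known-times algorithm (Theorem \ref{thm:known}). I would let the root of $\mathcal T$ be the top-level instance of size $n$, let a node at level $i$ be a subproblem of size $n/b^i$ with exactly $a$ children, so that $\mathcal T$ is a full $a$-ary tree of depth $D = \log_b n = \Oh(\log n)$. The leaves correspond to base cases, and a leaf is marked precisely when its base instance is a ``yes'' instance. Because the combining is disjunctive, the top instance is a yes-instance iff some leaf is marked, so deciding the problem is exactly the marked-vertex detection task solved by Algorithm \ref{algo}. The key conceptual point is that a single quantum walk over the whole tree detects a marked leaf directly, so no per-level error reduction is needed, which is what lets us avoid the $\Oh(c^{\log n})$ factor of \cite{ABBLS23}.

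Next I would fix the transition times. To compute the description of a child $v$ from its parent $u$ of size $m = n/b^{i-1}$, we run the partitioning procedure, so I set $t_v = T_{\aux}(m)$ for each of the $a$ children of $u$. Since $T_{\aux}$ is assumed known, all the $t_v$ are known in advance and we are in the known-times setting. Charging the full $T_{\aux}(m)$ to each of the $a$ children costs only a constant factor $a$, and the constant-cost base-case evaluations at the leaves are absorbed into the bottom level.

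The heart of the argument is to bound $T = \sum_v t_v^2$ by $\Oh(T_{\Q}(n)^2)$. Grouping the contributions by level,
\[ T = \sum_{i=1}^{D} a^i\, T_{\aux}(n/b^{i-1})^2 = a \sum_{j=0}^{D-1} a^{j}\, T_{\aux}(n/b^{j})^2. \]
Unrolling the defining recurrence gives $T_{\Q}(n) = \sum_{j} a^{j/2}\, T_{\aux}(n/b^{j})$; writing $x_j := a^{j/2}\, T_{\aux}(n/b^{j}) \geq 0$ we have $T = a\sum_j x_j^2$ and $T_{\Q}(n) = \sum_j x_j$, so the elementary inequality $\sum_j x_j^2 \leq \bigl(\sum_j x_j\bigr)^2$ yields $T \leq a\, T_{\Q}(n)^2 = \Oh(T_{\Q}(n)^2)$. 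Plugging into Theorem \ref{thm:known} with $D = \Oh(\log n)$ gives the query complexity $\Oh(\sqrt{TD}) = \Oh(T_{\Q}(n)\sqrt{\log n})$. For the time bound I would appeal to Claim \ref{thm:step}: each step $U$ uses $\Oh(1)$ queries to $A$ and $S$ plus $\poly\log(t_{\max},\Delta,D)$ gates in the QRAM model. Since the weights $w_x$ are determined by the known values of $T_{\aux}$, the state-preparation circuit $S$ can be implemented (using QRAM to fetch the relevant $T_{\aux}$ values) in polylogarithmic time, and a single partitioning step $A$ runs in $\widetilde\Oh(1)$ time, so the total time is $\widetilde\Oh(\sqrt{TD}) = \widetilde\Oh(T_{\Q}(n))$, absorbing the $\sqrt{\log n}$ factor.

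The step I expect to require the most care is the bound $T = \Oh(T_{\Q}(n)^2)$: one must attribute $T_{\aux}$ to edges so that the level sums line up with the unrolled recurrence, and verify the sum-of-squares versus square-of-sum comparison (including the boundary and base-case terms) so that the overhead is exactly $\sqrt{\log n}$ rather than a larger factor. A secondary point to handle carefully is the QRAM implementation of $S$, ensuring that fetching the known $T_{\aux}$ values and preparing $\ket{\psi_x'}$ does not introduce extra non-polylogarithmic factors into the time complexity.
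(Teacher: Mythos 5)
Your proposal is correct and follows essentially the same route as the paper: model the recursion as a complete $a$-ary computation tree of depth $\Oh(\log_b n)$ with known transition times $T_{\aux}$ at each level, bound $\sqrt{T}$ by the unrolled recurrence for $T_{\Q}(n)$ via the sum-of-squares versus square-of-sum inequality, and invoke Theorem \ref{thm:known} together with Claim \ref{thm:step}. The only cosmetic difference is in the implementation of $S$: the paper observes that by symmetry only uniform superpositions over children are needed (so no QRAM fetching of $T_{\aux}$ values is required for that step), which is marginally cleaner than your QRAM-lookup argument but yields the same bound.
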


\begin{proof}
    For simplicity, we assume that \emph{each} of $a$ subproblems is computed by a procedure with running time $T_{\aux}(n)$; since $a$ is constant, this will not affect the asymptotic complexity.
    If the combination of the recursive calls is performed by \textsc{Or}, then the divide \& conquer algorithm can be represented as a complete $a$-ary tree $\mathcal T$ of depth $D = \Oh(\log_b n)$, see Figure \ref{fig:dnq}.

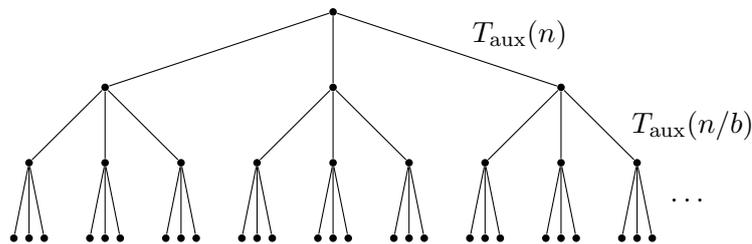
\begin{figure}[H]
\small
\begin{center}
\begin{tikzpicture}[
  vertex/.style={circle, fill=black, inner sep=1pt},
  level 1/.style={sibling distance=3cm, level distance=1cm},
  level 2/.style={sibling distance=1cm, level distance=1cm},
  level 3/.style={sibling distance=0.2cm, level distance=1cm},
  edge from parent/.style={draw, black}
]

\node[vertex] {}
  child {node[vertex] {}
    child {node[vertex] {}
        child {node[vertex] {}
        edge from parent node[left=.2cm] {~~~~~~~~~}}
        child {node[vertex] {}}
        child {node[vertex] {}}
    }
    child {node[vertex] {}
        child {node[vertex] {}}
        child {node[vertex] {}}
        child {node[vertex] {}}
    }
    child {node[vertex] {}
        child {node[vertex] {}}
        child {node[vertex] {}}
        child {node[vertex] {}}
    }
  }
  child {node[vertex] {}
    child {node[vertex] {}
        child {node[vertex] {}}
        child {node[vertex] {}}
        child {node[vertex] {}}
    }
    child {node[vertex] {}
        child {node[vertex] {}}
        child {node[vertex] {}}
        child {node[vertex] {}}
    }
    child {node[vertex] {}
        child {node[vertex] {}}
        child {node[vertex] {}}
        child {node[vertex] {}}
    }
  }
  child {node[vertex] {}
    child {node[vertex] {}
        child {node[vertex] {}}
        child {node[vertex] {}}
        child {node[vertex] {}}
    }
    child {node[vertex] {}
        child {node[vertex] {}}
        child {node[vertex] {}}
        child {node[vertex] {}}
    }
    child {node[vertex] {}
        child {node[vertex] {}}
        child {node[vertex] {}}
        child {node[vertex] {}
               edge from parent node[right=.2cm] {$\ldots$}}
        edge from parent node[right=.3cm] {$T_{\aux}(n/b)$}
    }
    edge from parent node[above=.2cm, right=.2cm] {$T_{\aux}(n)$}
  };

\end{tikzpicture}
\end{center}
    \caption{The computation tree of divide \& conquer.}
    \label{fig:dnq}
\end{figure}

A vertex $v$ on the $i$-th level of this tree has computation time $t_v = T_{\aux}(n/b^i)$.
Therefore,
\[T = \sum_v t_v^2 = \sum_{i=0}^D a^i \cdot T_{\aux}^2( n/b^i).\]
Since $\sqrt{p^2+q^2} < p + q$ for positive $p$ and $q$, this implies
\[\sqrt{T} < \sum_{i=0}^D \sqrt{a^i}\cdot T_{\aux}(n/b^i).\]
We can conclude that $\sqrt{T}$ satisfies the recurrence $T_{\Q}(n)$.
By then applying then Theorem \ref{thm:known} to this tree, we obtain a quantum algorithm with query complexity
\[\Oh(\sqrt{TD}) = \Oh(T_{\Q}(n)\sqrt{\log n}).\]
Finally, the time-efficient version follows from Claim \ref{thm:step} by noticing that the oracle $S$ can be implemented efficiently, as only uniform superpositions over the children need to be prepared because of the symmetry.
\end{proof}

\paragraph{Point on 3 lines.}

Here we give a new application of the quantum divide \& conquer for the problem of determining an intersection of three lines, improving the $\Oh(n^{1+\oh(1)})$ complexity of the previous quantum algorithm \cite{AL20}.

\begin{theorem} \label{thm:po3l}
    There is a bounded-error quantum algorithm that solves \textsc{Point-On-3-Lines} with time complexity $\widetilde \Oh(n)$ in the QRAM model.
\end{theorem}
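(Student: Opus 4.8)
The plan is to reduce \textsc{Point-On-3-Lines} to a divide \& conquer with disjunctive combining and then invoke Theorem \ref{thm:dnq}. The geometric engine is the $(1/r)$-cutting: given the $n$ input lines, a $(1/r)$-cutting partitions the plane into $\Oh(r^2)$ triangular cells so that every cell is crossed by at most $n/r$ lines. The observation that makes the reduction disjunctive is that if three lines meet at a common point $p$, then $p$ lies in the closure of some cell $\Delta$ and all three lines cross $\Delta$; hence a concurrent triple exists if and only if some cell contains a concurrent triple among the $\le n/r$ lines crossing it. I would carry each subproblem together with the cell it is restricted to, recurse on the crossing lines, and combine the $\Oh(r^2)$ cell-answers with \textsc{Or}. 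Points lying exactly on a cutting edge are handled in the standard way by assigning each line to every cell whose closure it meets, so that no concurrency is lost on cell boundaries; coincident or otherwise degenerate lines are removed in a preprocessing step, and the $\Oh(1)$-size base case is checked by brute force.

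To instantiate the recurrence, at a node with $m$ lines I would take $r=\sqrt m$, producing $a=\Oh(r^2)=\Oh(m)$ subproblems each with $m/r=\sqrt m$ lines, so the instance size drops from $m$ to $\sqrt m$ in one step. The auxiliary cost $T_{\aux}(m)$ is the time to build the cutting and distribute the lines among the cells; using hierarchical cuttings \cite{Cha93, CT16}, where a $(1/r)$-cutting is assembled through $\Oh(\log r)$ rounds of refinement by constant-size cuttings, each round touching only $\Oh(m)$ line-cell incidences, this can be done in $\widetilde\Oh(m)$ classical time and hence in $\widetilde\Oh(m)$ time in the QRAM model. Since the combining is disjunctive, Theorem \ref{thm:dnq} applies and yields time complexity $\widetilde\Oh(T_{\Q}(n))$, where $T_{\Q}$ solves
\[T_{\Q}(m) \le \sqrt{a}\,T_{\Q}(m/b) + T_{\aux}(m) = \Oh(\sqrt m)\,T_{\Q}(\sqrt m) + \widetilde\Oh(m).\]

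Solving this recurrence is where I expect the real work, and it is the place where the improvement over the $\Oh(n^{1+\oh(1)})$ bound of \cite{AL20} is bought. Because a $(1/r)$-cutting has $cr^2$ cells for some constant $c>1$, one has $\sqrt a=\sqrt c\,r$ while the size shrinks only by $b=r$, so each level inflates the cost by a factor $\sqrt c$. With a constant or merely poly-logarithmic $r$ this factor would compound over $\Theta(\log n)$ levels into an $n^{\oh(1)}$ (super-polylogarithmic) overhead, which is exactly the origin of the $n^{1+\oh(1)}$ bound. Choosing $r=\sqrt m$ instead caps the recursion depth at $\Oh(\log\log n)$, so the total inflation is only $(\sqrt c)^{\Oh(\log\log n)}$, which is poly-logarithmic in $n$; substituting the ansatz $T_{\Q}(m)=\Oh(m\log^p m)$ and matching the resulting self-consistency condition $\sqrt c/2^p\le 1$ shows that a constant power $p=\Oh(\log c)$ suffices, giving $T_{\Q}(n)=\widetilde\Oh(n)$ and hence an $\widetilde\Oh(n)$-time algorithm. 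The main obstacle is therefore the simultaneous requirement that the cutting be fine enough ($r=\sqrt m$) to keep the depth doubly logarithmic \emph{and} cheap enough ($T_{\aux}=\widetilde\Oh(m)$) to keep the additive term near-linear; the hierarchical near-linear-time construction of optimal-size cuttings is precisely what reconciles these two demands, with the remaining effort going into the boundary and degeneracy bookkeeping of the geometric reduction.
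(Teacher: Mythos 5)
Your reduction to disjunctive divide \& conquer via $(1/r)$-cuttings is the same as the paper's, but your parameter choice $r=\sqrt{m}$ introduces a fatal gap, and it is motivated by a concern that the paper's framework largely sidesteps. The paper simply takes $r$ to be a \emph{constant}: a node at level $i$ holds $n/r^i$ lines and is computed in $t_v=\Oh(n/r^i)$ steps, level $i$ has $\Oh(r^{2(i+1)})$ nodes, and the \emph{global} quantities $T=\sum_v t_v^2=\Oh(n^2\log n)$ and $D=\Oh(\log_r n)$ are fed into the single quantum walk of Theorem \ref{thm:known}, giving $\sqrt{TD}=\Oh(n\log n)=\widetilde\Oh(n)$. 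There is no level-by-level amplitude amplification, so no per-level factor is raised to the power of the depth; the compounding you attribute to the $\Oh(n^{1+\oh(1)})$ bound of \cite{AL20} is an artefact of recursive Grover composition, not of this framework. (To the extent that the constant hidden in the ``$\Oh(r^2)$ cells'' bound re-enters $T$ as $c^i$ at level $i$, that is a subtlety of the paper's own accounting --- but your fix does not repair it, for the reason below.)

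The gap: with $r=\sqrt{m}$ the claim $T_{\aux}(m)=\widetilde\Oh(m)$ cannot hold. A $(1/r)$-cutting has $\Theta(r^2)=\Theta(m)$ cells, each crossed by up to $m/r=\sqrt{m}$ lines, so the subproblems handed to the next level have total description length $\Theta(mr)=\Theta(m^{3/2})$; merely writing them down costs $\Omega(m^{3/2})$, and Chazelle's hierarchical construction runs in $\Theta(mr)$, which matches this lower bound. Your assertion that each refinement round touches only $\Oh(m)$ line--cell incidences is false: the rounds have geometrically growing incidence counts summing to $\Theta(mr)$, and the final cutting alone already has $\Theta(mr)$ incidences. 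Substituting the true $T_{\aux}(m)=\Theta(m^{3/2})$ into your recurrence makes the root term alone $\Omega(n^{3/2})$. The same failure is visible inside the walk framework directly: each of the $\Theta(n)$ children of the root needs $t_v=\Omega(n)$ steps just to scan the parent's lines and decide which cross its cell, contributing $\Omega(n\cdot n^2)=\Omega(n^3)$ to $T$ and hence $\Omega(n^{3/2})$ to $\sqrt{T}$. Note also that Theorem \ref{thm:dnq} is stated for constant $a$ and $b$, whereas you take $a=\Theta(m)$ and $b=\sqrt{m}$; that mismatch is repairable by returning to Theorem \ref{thm:known} and recomputing $T$, but the cost accounting is not. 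Keeping $r$ constant, as the paper does, is exactly what keeps $T_{\aux}(m)=\Oh(m)$.
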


\begin{proof}
    First we start with the definition of the relevant planar data structure.
    For a set of $n$ lines $S$, an $(1/r)$-cutting is a set of disjoint simplices (cells) that cover the whole plane, where each simplex is intersected by at most $\Oh(n/r)$ lines of $S$.
    There exist deterministic algorithms that construct an $(1/r)$-cutting of size $\Oh(r^2)$ in time $\Oh(nr)$ \cite{Cha93,CT16}. 

    Now consider the following classical algorithm for \textsc{Point-On-3-Lines}: pick a constant $r$ and construct an $(1/r)$-cutting for the given set of lines.
    Then for each of the $\Oh(r^2)$ cells, find all lines that intersect it and call the algorithm recursively.
    When a cell contains a constant number of lines, check in constant time whether there are three concurrent lines by examining all triples.
    
    On the $i$-th level of recursion ($0$-based), the algorithm consumes $\Oh((n/r^i) r^2) = \Oh(n/r^i)$ time and the $i$-th level contains $\Oh(r^{2(i+1)})$ recursive calls.
    The depth of the computation tree is only $D = \Oh(\log_r n)$.
    By applying Theorem \ref{thm:dnq}, we get complexity
    \[\widetilde \Oh\left(\sqrt{\sum_{i=0}^D r^{2(i+1)} \left(\frac{n}{r^i}\right)^2}\right) = \widetilde \Oh(n). \qedhere\]
\end{proof}

\subsection{Bomb query algorithms}

Finally, we briefly show how quantum algorithms based on guessing decision trees, also known as quantum bomb query algorithms, can be seen in our framework.

First we start with the definitions.
Consider a Boolean function $f : \{0,1\}^n \to \{0,1\}$ and a decision tree $A$ that computes $f$.
Suppose that additionally at each node $v$ of $A$ we have a bit which is a guess for the query result at this node.
Suppose that on an input $x \in \{0,1\}^n$, $A$ makes $T$ queries and $G$ incorrect guesses.
Then there exists a bounded-error quantum algorithm that computes $f(x)$ in $\Oh(\sqrt{TG})$ queries, as shown in \cite{LL15,CMP22}.

The idea for this quantum algorithm is as follows: first, generate a guess string $g_1\ldots g_m$, assuming the guesses at all nodes are correct.
Then run a variant of Grover's search that finds the first position $j$ where $g_j$ differs from the actual query at that node on $x$ \cite{Kot14}, and then repeat the whole algorithm starting from that node.
If the position $j$ at the runs of Grover's search is given by $d_1$, $\ldots$, $d_G$, then this algorithm performs $t_i = \Oh(\sqrt{d_i})$ queries to find the $i$-th wrong guess.
For now, assume that each such procedure has zero probability of error.

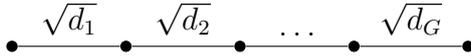
\begin{figure}[H]
\begin{center}
\begin{tikzpicture}[scale=1, vertex/.style={circle, fill=black, inner sep=1.5pt}]
  \node[vertex] (v1) at (0,0) {};
  \node[vertex] (v2) at (1.5,0) {};
  \node[vertex] (v3) at (3,0) {};
  \node[vertex] (v4) at (4.5,0) {};
  \node[vertex] (v5) at (6,0) {};

  \draw (v1) -- node[above] {$\sqrt{d_1}$} (v2);
  \draw (v2) -- node[above] {$\sqrt{d_2}$} (v3);
  \draw (v3) -- node[above] {$\ldots$} (v4);
  \draw (v4) -- node[above] {$\sqrt{d_G}$} (v5);
\end{tikzpicture}
\caption{The computation tree of the quantum bomb query algorithm.}
\end{center}
\end{figure}

The computation tree of this algorithm then is just a line of depth $D=G$, with computation cost at the $i$-th level being an unknown time $t_i$.
Since $\sum_{i=1}^G t_i^2 = \Oh(T)$, by applying Theorem \ref{thm:unknown}, we obtain a quantum query algorithm with complexity
\[\Oh\left(\sqrt{\sum_{i=1}^D t_i^2 D \log t_{\max}}\right) = \Oh\left(\sqrt{TG\log T}\right).\]

We can of course apply the Cauchy-Schwarz inequality to estimate that
\[\sum_{i=1}^G t_i = \Oh\left(\sum_{i=1}^G \sqrt{d_i}\right) = \Oh\left(\sqrt{\sum_{i=1}^G \left(\sqrt{d_i}\right)^2G}\right) = \Oh(\sqrt{TG}),\]
obtaining the same complexity (and even better by a square root of a logarithmic factor).
Our approach simply gives a conceptual viewpoint at how the complexity $\widetilde \Oh(\sqrt{TG})$ fits in our framework.

Moreover, we assumed that first element Grover's searches have no error, which is not true; if we need to find the leftmost of  multiple marked elements, then there is a constant probability of error.
Span programs can be used to avoid an additional logarithmic factor in the complexity from error reduction \cite{LL15,CMP22}.
For simplicity's sake, in this paper we have considered only exact computation procedures, which would require further logarithmic factors in the complexity to reduce their error.
Perhaps the recent breakthrough of \emph{transducers} \cite{BJY24} that allows to compose quantum algorithms without additional complexity for error reduction can be incorporated into our algorithm. 

\section{Acknowledgements}

We thank Andris Ambainis for introducing us to the problem and for the observation that led to Theorem \ref{thm:vts}.
We also thank Aleksandrs Belovs and Krišjānis Prūsis for helpful comments.

Supported by QOPT (QuantERA ERA-NET Cofund) and by Accenture Baltics conducted in collaboration with a dispersed Accenture team under an “Agreement on quantum computing use case research”.

\printbibliography

\end{document}